\def\NAT@def@citea{\def\@citea{\NAT@separator}}% Suppress spaces between citations using natbib.sty
\numberwithin{equation}{section}
\DeclareMathOperator*{\argmin}{arg\,min}
\def\A{{\mathcal A}}
\def\B{{\mathcal B}}
\def\I{{\mathcal I}}
\def\N{{\mathcal N}}
\def\S{{\mathcal S}}
\def\T{{\mathcal T}}
\newtheorem{lemma}{Lemma}[section]
\newtheorem{theorem}{Theorem}[section]
\newtheorem{assumption}{Assumption}[section]
\begin{document}

\title{A Primal Dual Active Set with Continuation Algorithm for $\ell_0$-Penalized High-dimensional Accelerated Failure Time Model}
\author{
\name{Peili Li\textsuperscript{a,b}, Ruoying Hu\textsuperscript{a}, Yanyun Ding\textsuperscript{c}, and Yunhai Xiao\textsuperscript{b}\thanks{CONTACT Yunhai Xiao. Email: yhxiao@henu.edu.cn}}
\affil{\textsuperscript{a}School of Mathematics and Statistics, Henan University, Kaifeng 475000, P.R. China; \\
\textsuperscript{b}Center for Applied Mathematics of Henan Province, Henan University, Zhengzhou 450046, P.R. China; \\
\textsuperscript{c}Institute of Applied Mathematics, Shenzhen Polytechnic University, Shenzhen 518055, P.R. China.}
}

%dingyanyun@szpu.edu.cn
\maketitle

\begin{abstract}
The accelerated failure time model has garnered attention due to its intuitive linear regression interpretation and has been successfully applied in fields such as biostatistics, clinical medicine, economics, and social sciences.
This paper considers a weighted least squares estimation method with an $\ell_0$-penalty based on right-censored data in a high-dimensional setting.
For practical implementation, we adopt an efficient primal dual active set algorithm and utilize a continuous strategy to select the appropriate regularization parameter.
By employing the mutual incoherence property and restricted isometry property of the covariate matrix, we perform an error analysis for the estimated variables in the active set during the iteration process.
Furthermore, we identify a distinctive monotonicity in the active set and show that the algorithm terminates at the oracle solution in a finite number of steps. Finally, we perform extensive numerical experiments using both simulated data and real breast cancer datasets to assess the performance benefits of our method in comparison to other existing approaches.
\end{abstract}

\begin{keywords}
High-dimensional accelerated failure time model; weighted least squares method; $\ell_0$-penalty; primal dual active set with continuation algorithm; oracle solution.
\end{keywords}

%%%%%%%%%%%%%%%%%%%%%%%%%%%%%%%%%%%%%%%%%%%%%%%%%%%%%%
\setcounter{equation}{0}
\section{Introduction}\label{section1}
%%%%%%%%%%%%%%%%%%%%%%%%%%%%%%%%%%%%%%%%%%%%%%%%%%%%%%

In research fields, such as mathematics, statistics, biology, and economics, survival analysis is commonly used to explore the mechanisms behind event occurrence or failure.
In these fields, it often involves investigating various factors influencing the subjects to identify the key ones, which allows us to assess and predict when events are likely to happen.
In recent years, a variety of survival models have been introduced, including the Bayesian parametric survival model, Cox proportional hazards model, accelerated failure time (AFT) model, and support vector machine model.
Among these models, the AFT model establishes a linear relationship between the logarithm or a known monotonic transformation of the failure time and the covariates. This linear regression approach contributes to its broader applicability compared to other models.
Mathematically, the AFT model can be expressed as
\begin{equation}\label{mmodel}
	\text{ln}(T_i)=X_i^{\top}\beta^*+\epsilon_i, \quad i=1,\ldots,n,
\end{equation}
where $T_i$ is the failure time, $X_i \in \mathbb{R}^{p}$ is the covariate vector, $\beta^*\in \mathbb{R}^{p}$ is the regression coefficient, and $\epsilon_i$'s are the independent and identically distributed (i.i.d.) random errors.
Here, we assume that $T_i$ is right-censored, meaning the starting time of observation is known, but the event termination time is unknown, which prevents us from determining the exact survival time. But we only know that the survival time is greater than the observation time.
At this point, we can obtain the observed data $\{(Y_i,\delta_i,X_i)\}_{i=1}^n$, where $Y_i=\text{min}\{\text{ln}(T_i),\text{ln}(C_i)\}$, $C_i$ is the censoring time and $\delta_i=\textbf{1}_{\left\{T_i \leq C_i\right\}}$ is the censoring indicator.

It is well known that   there are several methods available to estimate the coefficient $\beta^*$ appeared in \eqref{mmodel}, including the Buckley-James method \cite{BJ1979}, rank-based method \cite{Y1993} and weighted least-squares method \cite{S1996, SW1993}.
In this paper, we primarily focus on the estimation and variable selection in high-dimensional settings, where the sample size $n$ is assumed to be much smaller than the dimension $p$. In such cases, sparse penalty is often employed, which use a penalty term into the traditional estimation loss function.
For example, Johnson et al. \cite{J2008, JLZ2008} considered smoothly clipped absolute deviation penalty \cite{FL2001} to the Buckley-James estimator.
Cai et al. \cite{CHT2009} introduced an adaptive lasso method \cite{Z2006} based on rank-based estimation.
Hu et al. \cite{HC2013}, Huang et al. \cite{HM2010,HMX2006}, and Khan et al. \cite{KS2016} extended lasso \cite{T1996}, adaptive weighted elastic net \cite{HZ2010, ZZ2009}, bridge penalty \cite{FF1993} and minimax concave penalty \cite{Z2010} to weighted least-squares estimation, respectively.
Specifically, Cheng et al. \cite{CFHJZ2022} extended the non-convex, non-smooth $\ell_0$-penalty to weighted least-squares estimation. The $\ell_0$-penalty is well-suited for sparse estimation, as it effectively mitigates overfitting and enhances interpretability.
Additionally, Cheng et al. \cite{CFHJZ2022} extended the support detection and root finding algorithm \cite{HJLL2018} and derived $\ell_\infty$-error bounds for the solution sequence. Under specific conditions, they demonstrated that the estimated support accurately includes the true support within a finite number of steps.

In this paper, we focus on the high-dimensional AFT model with an $\ell_0$-penalty. We observe that the challenges presented by high dimensionality, heavy censoring, and non-convex, non-smooth characteristics greatly complicate the model estimation and variable selection.
For practical implementation, in this paper, we extend the efficient primal dual active set with continuation (PDASC) algorithm proposed by Jiao et al. \cite{JJL2015}, referred to as AFT-PDASC.
In each iteration, the active and inactive sets are initially identified based on the primal and dual variables from the previous iteration.
The primal variable is updated by solving a least-squares problem for the active set, while the dual variable is updated explicitly using the KKT conditions.
Furthermore, a continuation strategy for the regularization parameter is incorporated to achieve algorithm's convergence.
We observe that the active set exhibits a special monotonicity per iteration, where it gradually grows and eventually matches the true active set as the regularization parameter decreases.
In contrast to the work of Cheng et al. \cite{CFHJZ2022}, we analyze the $\ell_2$- and $\ell_\infty$- error bounds of the solution sequence on the active set, assuming the mutual incoherence property and restricted isometry property  of the covariate matrix.
In addition, we prove that the iterative process of this algorithm terminates in a finite number of steps at the oracle solution.
Finally, we perform simulation experiments to evaluate the estimation performance of AFT-PDASC for the high-dimensional AFT model, evaluate the effects of the involved parameters' values, and present performance comparisons with other existing algorithms using simulated data and real-world datasets.

\textbf{Organization:} The remainder of this paper is organized as follows. In Section \ref{algo}, we present the $\ell_0$-penalized AFT model, then transform it into a standard least squares estimation model through some simple algebraic manipulations, and finally list the iterative framework of AFT-PDASC.
In Section \ref{the}, we present some relevant theoretical results, which includes error analysis, the monotonicity analysis of the active set, and the finite-step termination property of AFT-PDASC.
In Section \ref{num}, we design some simulation and real-world experiments to evaluate the numerical performance of AFT-PDASC.
Finally, in Section \ref{con}, we conclude this paper.

\textbf{Notation:} For a vector $\beta \in \mathbb{R}^{p}$, we denote $\|\beta\|:=\sqrt{\sum_{i=1}^{p}\beta_i^2}$ and $\|\beta\|_{\infty}:=\max_{1\leq i \leq p}\{|\beta_i|\}$. The symbol $\|\beta\|_{T,\infty}$ is the $T$-th largest elements (in absolute value), $\|\beta\|_0$ denotes the number of non-zero elements in $\beta$ and $\text{supp}(\beta):=\{i \ | \ \beta_i\neq 0, i=1,2,\ldots,p\}$. For an index set $\A$, we denote $|\A|$ as its length and $\A^c$ is the complement. In addition, we also denote $\beta_{\A}:=\{\beta_i, i \in \A\}\in \mathbb{R}^{|\A|}$.
For a matrix $X\in \mathbb{R}^{n\times p}$, we use $\|X\|_{\infty}$ to represent its maximum value (in absolute value) and use the notation $X_{\A}\in \mathbb{R}^{n\times|\A|}$.
We use $\beta^*$ and $\hat{\beta}$ to denote the true and the estimated regression coefficient, respectively. We assume the true coefficient $\beta^*$ has $K$ non-zero components along with its support set denoted as $\A^*$, i.e., $|\A^*|=K$. Finally, we denote $\S=:\{1,2,\ldots,p\}$ and let $\I^*:=\S \setminus \A^*$.
%%%%%%%%%%%%%%%%%%%%%%%%%%%%%%%%%%%%%%%%%%%%%%%55
\section{AFT-PDASC algorithm for $\ell_0$-penalized AFT model}\label{algo}
\setcounter{equation}{0}
%%%%%%%%%%%%%%%%%%%%%%%%%%%%%%%%%%%%%%%%%%%%%%%%%%%%
\subsection{$\ell_0$-penalized AFT model}
%%%%%%%%%%%%%%%%%%%%%%%%%%%%%%%%

Let $Y_{(1)}\leq \ldots \leq Y_{(n)}$ be the order statistics of $Y_i$'s, $\delta_{(1)}, \ldots, \delta_{(n)}$ be the associated censoring indicators, and $X_{(1)}, \ldots, X_{(n)}$ be the associated covariates.
The  weighted least-squares estimation method with an $\ell_0$-penalty for  AFT model \eqref{mmodel} is given by:
\begin{equation}\label{2model1}
\min\limits_{\beta\in \mathbb{R}^p}\frac{1}{2} \sum_{i=1}^n w_{(i)}\big(Y_{(i)}-X_{(i)}^{\top} \beta\big)^2 +\lambda\|\beta\|_0,
\end{equation}
where $w_{(i)}$'s  are the jumps in Kaplan-Meier estimator in the form of
$w_{(1)}=\delta_{(1)}/{n}$ and that
$$
w_{(i)}=\frac{\delta_{(i)}}{n-i+1} \prod_{j=1}^{i-1}\big(\frac{n-j}{n-j+1}\big)^{\delta_{(j)}},\quad i=2, \ldots, n,
$$
and $\lambda>0$ is a regularization parameter to control the sparsity level of $\beta$.
For convenience, we make a standardization on  (\ref{2model1}).
Define
$$
\tilde{Y}:=\Big(\sqrt{w_{(1)}}Y_{(1)}, \ldots, \sqrt{w_{(n)}}Y_{(n)}\Big)^{\top} \quad\text{and}\quad \tilde{X}:=\Big(\sqrt{w_{(1)}}X_{(1)}, \ldots, \sqrt{w_{(n)}}X_{(n)}\Big)^{\top}.
$$
Assume that $\|\tilde{X}_i\|_2>0$, where $\tilde{X}_i\in\mathbb{R}^n$ is the $i$-th column of $\tilde{X}$.
Define
$$
D:=\text{diag}\Big(\frac{1}{\|\tilde{X}_1\|_2},\ldots,\frac{1}{\|\tilde{X}_p\|_2}\Big),
$$
and denote
$\eta:=D^{-1}\beta$, $\bar{X}:=\tilde{X}D$, $\bar{Y}:=\tilde{Y}$. Furthermore, let $\bar{Y} := \bar{X}\eta^* + \gamma$, where $\gamma$ is an error satisfying $\|\gamma\|\leq \bar{\epsilon}$, and $\bar{\epsilon}\geq 0$ is a noise level.
Using these notation, we can rewrite \eqref{2model1} as the following $\ell_0$-penalized least-squares problem:
\begin{equation}\label{2model2}
\min\limits_{\eta\in \mathbb{R}^p}L(\eta):=\frac{1}{2} \|\bar{Y}-\bar{X} \eta\|^2 +\lambda\|\eta\|_0.
\end{equation}
In the following subsection, we will develop an algorithm to solve (\ref{2model2}) to obtain the solution $\hat{\eta}$. Subsequently,  the estimated coefficient $\hat{\beta}$ of \eqref{mmodel} can be computed by using $\hat{\beta} = D\hat{\eta}$.

%%%%%%%%%%%%%%%%%%%%%%%%%%%%%%%%%%%%%%%%%%%
\subsection{AFT-PDASC algorithm}
%%%%%%%%%%%%%%%%%%%%%%%%%%%%%%%%%
Due to the non-convex and non-smooth nature of the objective function in (\ref{2model2}), we focus on a coordinate-wise minimization approach.
A vector $\eta=(\eta_1,\ldots,\eta_p)^{\top}\in \mathbb{R}^p$ is said to be a coordinate-wise minimizer of $L(\eta)$ if it is a local minimizer along each coordinate direction, that is,
$$
\eta_i\in\argmin _{t \in \mathbb{R}} L\big(\eta_1,...,\eta_{i-1},t,\eta_{i+1},...,\eta_p\big).
$$
The following lemma presents the KKT conditions for the coordinate-wise minimizers of $ L(\eta)$, which plays a crucial role in the development of our subsequent algorithm. For the proof, one can refer to \cite{JJLR2013, LJLK2022}.

\begin{lemma}\label{kktc}
 If $\bar{\eta}$ is a minimizer of (\ref{2model2}), then there exists a $\bar{d}$ such that the following KKT system holds:
 \begin{equation}\label{kk}
\left\{
\begin{aligned}
&\bar{d}=\bar{X}^{\top}(\bar{Y}-\bar{X}\bar{\eta}),\\
&\bar{\eta}=S_{\lambda}(\bar{\eta}+\bar{d}),
\end{aligned}
\right.
\end{equation}
where the $i$-th element of $S_{\lambda}(x)$ is defined as
\begin{equation}\label{sl}
	\Big(S_{\lambda}(x)\Big)_{i}
	\begin{cases}
		=0, &if \ |x_{i}|<\sqrt{2\lambda},\\
		\in\{0,x_{i}\},&if \ |x_{i}|=\sqrt{2\lambda},\\
		=x_{i},&if \ |x_{i}|>\sqrt{2\lambda}.
	\end{cases}
\end{equation}
Conversely, if $\bar{\eta}$ and $\bar{d}$ satisfy (\ref{kk}), then $\bar{\eta}$ is a local minimizer of model (\ref{2model2}).
\end{lemma}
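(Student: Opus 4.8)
The plan is to reduce the minimization of $L$ to a family of independent scalar problems, exploiting that the $\ell_0$-penalty is separable and that by construction $\bar X=\tilde X D$ has unit-norm columns, i.e. $\|\bar X_i\|=1$ for every $i$. For the forward implication I would use that a minimizer $\bar\eta$ of \eqref{2model2} is in particular a coordinate-wise minimizer. Fixing an index $i$ and freezing the remaining coordinates at $\bar\eta_j$, I would write $r:=\bar Y-\bar X\bar\eta$ and set $\bar d_i:=\bar X_i^{\top}r$, which is precisely the $i$-th component of the first line of \eqref{kk}. Expanding $L(\bar\eta_1,\dots,t,\dots,\bar\eta_p)$ as a function of $t$, using $\|\bar X_i\|=1$, and completing the square reduces the $t$-dependent part, up to an additive constant, to
$$
g(t)=\tfrac12\bigl(t-(\bar\eta_i+\bar d_i)\bigr)^2+\lambda\,\mathbf 1_{\{t\ne0\}}.
$$

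The scalar problem is then solved by comparing its only two candidate minimizers: $t=0$, with value $\tfrac12(\bar\eta_i+\bar d_i)^2$, and $t=\bar\eta_i+\bar d_i$, with value $\lambda$. Writing $u_i:=\bar\eta_i+\bar d_i$, the comparison of $\tfrac12 u_i^2$ against $\lambda$ reproduces exactly the three regimes of $S_\lambda$ in \eqref{sl}: $t=0$ when $|u_i|<\sqrt{2\lambda}$, $t=u_i$ when $|u_i|>\sqrt{2\lambda}$, and either value on the tie $|u_i|=\sqrt{2\lambda}$. Hence $\bar\eta_i\in\bigl(S_\lambda(\bar\eta+\bar d)\bigr)_i$ for all $i$, which is the second line of \eqref{kk}, completing the forward direction.

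For the converse, suppose $\bar\eta,\bar d$ satisfy \eqref{kk}. I would first read off the structure on $\A:=\supp(\bar\eta)$ and $\I:=\S\setminus\A$: on $\A$ one has $\bar d_i=0$ and $|\bar\eta_i|\ge\sqrt{2\lambda}$, whereas on $\I$ one has $\bar\eta_i=0$ and $|\bar d_i|\le\sqrt{2\lambda}$. Take any $\eta=\bar\eta+h$ with $\|h\|$ small. Since $|\bar\eta_i|\ge\sqrt{2\lambda}>0$ on $\A$, those coordinates stay nonzero, so $\supp(\eta)\supseteq\A$ and the only new nonzeros lie in $J:=\{i\in\I:\ h_i\ne0\}$, giving $\|\eta\|_0=\|\bar\eta\|_0+|J|$. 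Expanding the quadratic term and using $\bar d=\bar X^{\top}r$ together with $\bar d_\A=0$ yields
$$
L(\eta)-L(\bar\eta)=\tfrac12\|\bar X h\|^2-\sum_{i\in J}\bar d_i h_i+\lambda|J|=\tfrac12\|\bar X h\|^2+\sum_{i\in J}\bigl(\lambda-\bar d_i h_i\bigr).
$$

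Since $|\bar d_i|\le\sqrt{2\lambda}$ on $\I\supseteq J$, each summand satisfies $\lambda-\bar d_i h_i\ge\lambda-\sqrt{2\lambda}\,|h_i|\ge0$ once $\|h\|$ is small enough that $|h_i|<\sqrt{\lambda/2}$; combined with $\tfrac12\|\bar X h\|^2\ge0$ this gives $L(\eta)\ge L(\bar\eta)$, so $\bar\eta$ is a local minimizer. I expect the converse to be the only genuinely delicate part: one must ensure that the support can only grow under small perturbations — guaranteed by the strict lower bound $|\bar\eta_i|\ge\sqrt{2\lambda}$ on $\A$ — and that the fixed penalty gain $\lambda$ of each newly activated coordinate dominates the linear residual term $\bar d_i h_i$, which is controlled precisely by the dual bound $|\bar d_i|\le\sqrt{2\lambda}$ and the smallness of $h$.
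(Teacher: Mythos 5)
Your proof is correct and follows the standard argument that the paper delegates to its references \cite{JJLR2013, LJLK2022}: the forward direction via coordinate-wise reduction to the scalar hard-thresholding problem (using that $\bar X=\tilde X D$ has unit-norm columns), and the converse via a direct quadratic expansion showing that the penalty gain $\lambda$ on each newly activated coordinate dominates the cross term once $\|h\|_\infty<\sqrt{\lambda/2}$. No gaps; the one delicate point you flag (persistence of the support on $\supp(\bar\eta)$ from $|\bar\eta_i|\ge\sqrt{2\lambda}>0$) is handled correctly.
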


Let $\bar{\A}:=\text{supp}(\bar{\eta})$ and $\bar{\I}:=\bar{\A}^{c}$. Thus, it can be seen from (\ref{kk}) and (\ref{sl}) that
\begin{equation}\label{calai}
\bar{\A}:=\Big\{i : |\bar{\eta}_i+\bar{d}_i|> \sqrt{2\lambda}\Big\}, \quad \bar{\I}:=\Big\{i : |\bar{\eta}_i+\bar{d}_i|\leq \sqrt{2\lambda}\Big\},
\end{equation}
and that
\begin{equation}\label{upai}
\bar{\eta}_{\bar{\I}}=0, \quad \bar{\eta}_{\bar{\A}}=\Big(\bar{X}_{\bar{\A}}^{\top}\bar{X}_{\bar{\A}}\Big)^{-1}\bar{X}_{\bar{\A}}^{\top}\bar{Y}, \quad \bar{d}_{\bar{\A}}=0, \quad \bar{d}_{\bar{\I}}=\bar{X}^{\top}_{\bar{\I}}\Big(\bar{Y}-\bar{X}_{\bar{\A}}\bar{\eta}_{\bar{\A}}\Big).
\end{equation}
Based on \eqref{calai} and \eqref{upai}, we can iteratively derive $\bar\eta$ and $\bar d$ by applying a specific rule.
Typically, for the active set $\mathcal{A}^{k}$, we can compute the primal variable $\eta^{k}$ by solving a least-squares problem, and then explicitly update the dual variable $d^{k}$ using the KKT conditions. The latest $\beta^{k}$ can also be get using the relation $\beta^{k}=D\eta^{k}$.
This process is repeated until the active sets in successive steps become consistent, or the maximum number of iterations is reached.

It is important to note that the primal-dual active set method is essentially equivalent to the well-known semismooth Newton method, which demonstrates local superlinear convergence, see \cite{2002primal} for more details.
However, to fully leverage this characteristic, a good initial guess is often required.
In this paper, we employ a continuation technique for the regularization parameter $\lambda$ to address this issue.
Specifically, given an initial value $\lambda_0$, we set $\lambda_k = \lambda_0\rho^k$ with $\rho\in (0,1)$. Based on $\lambda_k$, we iterate the above inner loop to obtain $\eta(\lambda_k)$ and $d(\lambda_k)$. Next, we use $\eta(\lambda_k)$ and $d(\lambda_k)$ as the initial values for the corresponding problem with $\lambda_{k+1}$ and continue the iterative process until a discrete analogue of the discrepancy principle is satisfied.

Based on the above analysis, we outline the iterative framework of the AFT-PDASC below.

\begin{framed}
	\noindent
	{\bf Algorithm $1$: AFT-PDASC}
	\vskip 2.0mm \hrule \vskip 4mm
	\noindent
{\bf 1}. Choose $\lambda_0\geq\frac{1}{2}\|\bar{X}^{\top}\bar{Y}\|_{\infty}^2$ and define $\A(\lambda_0):=\varnothing$ and $\eta(\lambda_0):=0$. Let $d(\lambda_0):=\bar{X}^{\top}\bar{Y}$ and choose $\rho\in (0,1)$ and $\bar{\epsilon}>0$. Choose positive integers $K_{\max}\in \N$ and $J_{\max}\in \N$.\\
{\bf 2.} \textbf{for $k=1,2,\ldots,K_{\max}$ do}\\
{\bf 3.} \qquad Let $\lambda_k:=\rho \lambda_{k-1}$, $\A_0:=\A(\lambda_{k-1})$, and $(\eta^0,d^0):=(\eta(\lambda_{k-1}),d(\lambda_{k-1}))$.\\
{\bf 4.} \qquad \textbf{for $j=1,2,\ldots,J_{\max}$ do}\\
{\bf 5.} \qquad \qquad Compute
$\A_j=\{i : |\eta_i^{j-1}+d_i^{j-1}|>\sqrt{2\lambda_k}\}$ and let $\I_j=\A_j^c$;\\
{\bf 6.} \qquad \qquad Break if $\A_j=\A_{j-1}$;\\
{\bf 7.} \qquad \qquad Otherwise, compute
\begin{align*}
&\eta^j_{\I_j}:=0, \quad \eta^j_{\A_j}:=(\bar{X}_{\A_j}^{\top}\bar{X}_{\A_j})^{-1}\bar{X}_{\A_j}^{\top}\bar{Y}, \\
&d^j_{\A_j}:=0, \quad d^j_{\I_j}:=\bar{X}^{\top}_{\I_j}(\bar{Y}-\bar{X}_{\A_j}\eta^j_{\A_j}).
\end{align*}
{\bf 8.} \qquad\qquad Compute  $\beta^j:=D\eta^j$.\\
{\bf 9.} \qquad\textbf{end for}\\
{\bf 10.} \quad \ Choose $\tilde{j}=\min(J_{\max},j)$. Compute
$$
\A(\lambda_k):=\big\{i : |\eta_i^{\tilde{j}}+d_i^{\tilde{j}}|>\sqrt{2\lambda_k}\big\} \ \text{and} \ \big(\eta(\lambda_{k}),d(\lambda_{k})\big):=(\eta^{\tilde{j}},d^{\tilde{j}}).
$$
{\bf 11.} \quad \  Compute $\beta(\lambda_k):=D\eta(\lambda_k)$.\\
{\bf 12.} \quad \ Stop if $\|\bar{Y}-\bar{X}\eta(\lambda_{k})\|\leq \bar{\epsilon}$. Otherwise, continue.\\
{\bf 13.} \textbf{end for}
\end{framed}

It is important to highlight that Cheng et al. \cite{CFHJZ2022} also proposed an algorithm based on the KKT system for the coordinate-wise minimizers of model (\ref{2model2}), but there are several key differences compared to AFT-PDASC.
In computing the active set, Cheng et al. used $\|\eta + d\|_{T, \infty}$ to eliminate the parameter $\lambda$, but this approach relies on a new parameter $T$.
Additionally, a step size $\tau$ is also required for computing the active set, which makes both the theoretical analysis and numerical performance highly dependent on this step size.
In summary, the performance of the algorithm proposed by Cheng et al. \cite{CFHJZ2022} heavily depends on the selection of the parameters $T$ and $\tau$, which can be a challenging task in practical implementation.
In contrast, our approach directly uses the regularization parameter \(\lambda\) and employs a continuation technique to allow automatic selection of this parameter.
Furthermore, the theoretical analysis presented in this paper differs considerably from the work of Cheng et al. \cite{CFHJZ2022}.
Most importantly, the numerical comparisons with \cite{CFHJZ2022} demonstrate that the algorithm presented in this paper can consistently achieve high-precision solutions generally.

%%%%%%%%%%%%%%%%%%%%%%%%%%%%%%%%%%%%%%%%%%%%%%%%%%%%%
\section{Theoretical analysis}\label{the}
\setcounter{equation}{0}
%%%%%%%%%%%%%%%%%%%%%%%%%%%%%%%%%%%%%%%%%%%%%%%%%%%%%%

In this section, we derive an error analysis for the estimated coefficients on the active set during the iterative process, leveraging the mutual incoherence property (MIP) and the restricted isometry property (RIP) of the covariate matrix \(\bar{X}\). Additionally, we provide a finite-step termination analysis of AFT-PDASC at the oracle solution by using the special monotonicity of the active set.
For simplicity, we let $\lambda > 0$ and $s > 0$, and define $\mathcal{T}_{\lambda,s} := \{i : |\eta^*_i| \geq \sqrt{2 \lambda} s\}$ to identify the indices of $\eta^*$ where the values are relatively large.
In addition, we also define the least-squares solution on the true support set $\A^*$ as the oracle solution $\eta^o\in \mathbb{R}^{|\A^*|}$, which obeys the form $\eta^o:=(\bar{X}_{\A^*}^{\top}\bar{X}_{\A^*})^{-1}\bar{X}_{\A^*}^{\top}\bar{Y}$.

For the purpose of the subsequent theoretical analysis, we need a couple of assumptions on the covariate matrix $\bar{X}$:

\begin{assumption}\label{assum}
	The following conditions hold:
\begin{itemize}
  \item[(1)] The mutual coherence $\nu=\max\limits _{1 \leq i\neq j \leq p}\Big\{|\bar{X}_i^{\top} \bar{X}_j|\Big\}$ of the covariate matrix $\bar{X}$ is small, where $\bar{X}_i$ is the $i$-th column of $\bar{X}$.
  \item[(2)] There exists a constant $\delta\in (0,1)$ such that $(1-\delta)\|\eta\|^2\leq\|\bar{X}\eta\|^2\leq(1+\delta)\|\eta\|^2$ for any $\eta \in \mathbb{R}^{p}$ satisfying $\|\eta\|_0\leq s$. Here, we define $\delta_s$ as the infimum of all
parameters $\delta$ for which the RIP holds.
\end{itemize}
\end{assumption}

The following lemma provides some basic estimates under the MIP and RIP conditions.
One may refer to \cite{DM2009,NT2009,JJL2015} for its proof.
\begin{lemma}\label{mipl}
Assume $\A$ and $\B$ are disjoint subsets of $\S$. Then
\begin{enumerate}
  \item
\begin{align*}
\|\bar{X}_{\A}^{\top}\bar{Y}\|_{\infty}\leq\|\bar{Y}\|, &\quad \|\bar{X}_{\B}^{\top}\bar{X}_{\A}\eta_{\A}\|_{\infty}\leq \nu|\A|\|\eta_{\A}\|_{\infty}, \\ \|(\bar{X}_{\A}^{\top}\bar{X}_{\A})^{-1}\eta_{\A}\|_{\infty}\leq & \frac{\|\eta_{\A}\|_{\infty}}{1-(|\A|-1)\nu}, \ if \ (|\A|-1)\nu<1.
\end{align*}
  \item
  \begin{align*}
  \|\bar{X}_{\A}^{\top}\bar{X}_{\A}\eta_{\A}\|\gtreqless(1 \mp \delta_{|\A|})\|\eta_{\A}\|, &\quad
  \|(\bar{X}_{\A}^{\top}\bar{X}_{\A})^{-1}\eta_{\A}\|\gtreqless \frac{\|\eta_{\A}\|}{1 \pm\delta_{|\A|}},\\
  \|\bar{X}_{\A}^{\top}\bar{X}_{\B}\|\leq \delta_{|\A|+|\B|}, \quad  \|(\bar{X}_{\A}^{\top}\bar{X}_{\A})^{-1}&\bar{X}_{\A}^{\top}\bar{Y}\|\leq \frac{\|\bar{Y}\|}{\sqrt{1-\delta_{|\A|}}}, \quad \delta_s \leq \delta_{s'}, \ if \ s<s'.
  \end{align*}
\end{enumerate}
\end{lemma}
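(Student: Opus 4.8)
The plan is to treat the two groups of inequalities separately, since Part~1 rests entirely on the mutual coherence $\nu$ while Part~2 rests on the RIP constants $\delta_{|\A|}$. Throughout I would exploit the crucial structural fact that the columns of $\bar{X}$ are unit-normalized by construction (recall $\bar{X}=\tilde{X}D$ with $D$ chosen so that each $\|\bar{X}_i\|_2=1$), which forces the diagonal of every Gram matrix $\bar{X}_{\A}^{\top}\bar{X}_{\A}$ to equal one and its off-diagonal entries to be bounded by $\nu$ in absolute value.

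For Part~1, the first two bounds are immediate. The $i$-th entry of $\bar{X}_{\A}^{\top}\bar{Y}$ is $\bar{X}_i^{\top}\bar{Y}$, so Cauchy--Schwarz together with $\|\bar{X}_i\|=1$ gives $|\bar{X}_i^{\top}\bar{Y}|\le\|\bar{Y}\|$; and since $\A,\B$ are disjoint, each entry of $\bar{X}_{\B}^{\top}\bar{X}_{\A}\eta_{\A}$ equals $\sum_{j\in\A}(\bar{X}_i^{\top}\bar{X}_j)(\eta_{\A})_j$ with $i\neq j$, hence is bounded by $\nu|\A|\|\eta_{\A}\|_{\infty}$. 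The third bound is the only nontrivial one: writing $G:=\bar{X}_{\A}^{\top}\bar{X}_{\A}=I+H$ with $H$ having zero diagonal and off-diagonal entries of modulus at most $\nu$, I would set $u:=G^{-1}\eta_{\A}$ so that $u=\eta_{\A}-Hu$, take $\|\cdot\|_{\infty}$, and use that the induced $\infty$-norm of $H$ (its maximum absolute row sum) is at most $(|\A|-1)\nu$; rearranging $\|u\|_{\infty}\le\|\eta_{\A}\|_{\infty}+(|\A|-1)\nu\|u\|_{\infty}$ yields the claim whenever $(|\A|-1)\nu<1$.

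For Part~2, I would first observe that RIP forces the spectrum of the symmetric positive-definite matrix $\bar{X}_{\A}^{\top}\bar{X}_{\A}$ to lie in $[1-\delta_{|\A|},\,1+\delta_{|\A|}]$, from which the two-sided bounds on $\|\bar{X}_{\A}^{\top}\bar{X}_{\A}\eta_{\A}\|$ and $\|(\bar{X}_{\A}^{\top}\bar{X}_{\A})^{-1}\eta_{\A}\|$ follow by the standard eigenvalue sandwich for a symmetric matrix and its inverse. The bound $\|(\bar{X}_{\A}^{\top}\bar{X}_{\A})^{-1}\bar{X}_{\A}^{\top}\bar{Y}\|\le\|\bar{Y}\|/\sqrt{1-\delta_{|\A|}}$ follows by setting $w:=(\bar{X}_{\A}^{\top}\bar{X}_{\A})^{-1}\bar{X}_{\A}^{\top}\bar{Y}$, noting that $\bar{X}_{\A}w$ is the orthogonal projection of $\bar{Y}$ onto the range of $\bar{X}_{\A}$ so $\|\bar{X}_{\A}w\|\le\|\bar{Y}\|$, and then applying the RIP lower bound $(1-\delta_{|\A|})\|w\|^2\le\|\bar{X}_{\A}w\|^2$. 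The monotonicity $\delta_s\le\delta_{s'}$ for $s<s'$ is immediate because the class of $s$-sparse vectors is contained in the class of $s'$-sparse vectors, so any $\delta$ certifying RIP at level $s'$ also certifies it at level $s$, and $\delta_s$ is an infimum over a larger feasible set.

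The main obstacle is the restricted-orthogonality bound $\|\bar{X}_{\A}^{\top}\bar{X}_{\B}\|\le\delta_{|\A|+|\B|}$, since RIP as stated only controls $\|\bar{X}\eta\|$ for a single sparse $\eta$ and does not directly bound a cross term between two different supports. I would handle it by a polarization argument: for unit vectors $u,v$ supported on the disjoint sets $\A,\B$, the extended vector $u\pm v$ has sparsity at most $|\A|+|\B|$ and satisfies $\|u\pm v\|^2=2$, so $4\langle\bar{X}u,\bar{X}v\rangle=\|\bar{X}(u+v)\|^2-\|\bar{X}(u-v)\|^2\le 2(1+\delta_{|\A|+|\B|})-2(1-\delta_{|\A|+|\B|})=4\delta_{|\A|+|\B|}$; taking the supremum over such $u,v$ and using $|u^{\top}\bar{X}_{\A}^{\top}\bar{X}_{\B}v|=|\langle\bar{X}u,\bar{X}v\rangle|$ gives the operator-norm bound. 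The delicate point to get right here is the bookkeeping of which RIP inequality, upper or lower, is applied to which of $u+v$ and $u-v$, so that the two copies of $\delta_{|\A|+|\B|}$ add rather than cancel.
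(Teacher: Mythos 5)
Your proof is correct, and since the paper itself gives no proof of this lemma (it defers to the cited references), your argument is exactly the standard one found there: Cauchy--Schwarz with unit-normalized columns and a Neumann-series bound on the Gram inverse for the coherence part, and the eigenvalue sandwich, projection bound, and polarization identity for the RIP part. Nothing further is needed.
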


For the subsequent theoretical analysis, we also require the following assumption on the noise level $\bar{\epsilon}$.
\begin{assumption}\label{assum2}
The noise level $\bar{\epsilon}$ is small in sense that $\bar{\epsilon}\le\alpha\min_{i\in \A^*}\{|\eta_i^*|\}$ for a certain one $0\le\alpha<1/2$.
\end{assumption}

We  now provide the main theoretical results of the estimation method \eqref{2model2}  based on the MIP condition in Assumption \ref{assum} (1) on the covariate matrix $\bar{X}$.

\begin{theorem}\label{th1}
Suppose that Assumption \ref{assum2} holds. If $\nu<\frac{1-2\alpha}{2K-1}$, then for any $\rho \in\big(((2K-1) \nu+2\alpha)^2, 1\big)$, it holds that:
\begin{itemize}
\item[(1)] In the $\lambda_k$ subproblem, for $j=1,2,\ldots,J_{\max}$, we set $\B:=\A^* \backslash \A_j $ and $\I_j:=\S \backslash \A_j$. If $|\A_j|\le K$, then we have
 $$
\begin{aligned}
	\big\|\eta^{j}_{\A_j} -\eta_{\A_j}^* \big\|_{\infty} & \leq \frac{1}{1-(|\A_j|-1) \nu}\Big(\nu|\B|\cdot\|\eta_{\B}^*\|_{\infty}+\bar{\epsilon}\Big), \\
	\big\|\beta^j_{\A_j} -\beta_{\A_j}^*\big\|_{\infty}
&\le \frac{\|D\|_{\infty}}{1-(|\A_j|-1)\nu}\Big(\nu|\B|\cdot\|D^{-1}\|_{\infty}\cdot\|\beta^*_{\B}\|_{\infty}+\bar{\epsilon}\Big), \\
\big\|\beta^j_{\A_j} -\beta_{\A_j}^*\big\|
&\le \frac{\sqrt{K}\|D\|_{\infty}}{1-(|\A_j|-1)\nu}\Big(\nu|\B|\cdot\|D^{-1}\|_{\infty}\cdot\|\beta^*_{\B}\|_{\infty}+\bar{\epsilon}\Big).
\end{aligned}
$$
\item[(2)] For $k=0,1,2,\ldots,K_{\max}$, there exist $s_1, s_2 \in\big(1/(1-K \nu+\nu-\alpha), 1/(K \nu+\alpha)\big)$ with $s_1>s_2$, such that the active sets ${\A(\lambda_k)}$ have the special monotonicity, that is,
\begin{equation}\label{col22}
\text{If} \ \T_{\lambda_k,s_1}\subseteq \A(\lambda_{k-1})\subseteq \A^*, \quad \text{then} \ \T_{\lambda_k,s_2}\subseteq \A(\lambda_{k})\subseteq \A^*.
\end{equation}
Furthermore, AFT-PDASC  terminates in a finite number of steps.

\item[(3)] Let $\alpha\le\frac{1-2(K-1)\nu}{K+3}$ and let $\xi:=\frac{1-2(K-1) \nu-2 \alpha-\alpha^2}{2 K} \min _{i \in \A^*}\{|\eta_i^*|^2\}$. Then, for any $\lambda \in\big(\frac{\bar{\epsilon}^2}{2}, \xi\big)$, AFT-PDASC terminates at the oracle solution $\eta^o$.
\end{itemize}	
\end{theorem}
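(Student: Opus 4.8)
The plan is to treat the three parts in order, since (2) and (3) both rest on the error bound of (1). For part (1), I would start from the exact update $\eta^{j}_{\A_j}=(\bar{X}_{\A_j}^{\top}\bar{X}_{\A_j})^{-1}\bar{X}_{\A_j}^{\top}\bar{Y}$ and substitute the model $\bar{Y}=\bar{X}\eta^*+\gamma$. In the regime where $\A_j\subseteq\A^*$ one has $\bar{X}\eta^*=\bar{X}_{\A_j}\eta^*_{\A_j}+\bar{X}_{\B}\eta^*_{\B}$ with $\B=\A^*\backslash\A_j$, so the leading term $\bar{X}_{\A_j}\eta^*_{\A_j}$ cancels after applying $(\bar{X}_{\A_j}^{\top}\bar{X}_{\A_j})^{-1}\bar{X}_{\A_j}^{\top}$, leaving the clean representation $\eta^{j}_{\A_j}-\eta^*_{\A_j}=(\bar{X}_{\A_j}^{\top}\bar{X}_{\A_j})^{-1}\big(\bar{X}_{\A_j}^{\top}\bar{X}_{\B}\eta^*_{\B}+\bar{X}_{\A_j}^{\top}\gamma\big)$. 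Taking $\|\cdot\|_\infty$ and applying the three estimates of Lemma \ref{mipl}(1) term by term — the inverse bound $1/(1-(|\A_j|-1)\nu)$, the cross bound $\nu|\B|\,\|\eta^*_{\B}\|_\infty$, and $\|\bar{X}_{\A_j}^{\top}\gamma\|_\infty\le\|\gamma\|\le\bar{\epsilon}$ — yields the first inequality. The two $\beta$-bounds then follow mechanically from $\beta=D\eta$, the diagonal-scaling estimates $\|\beta^j_{\A_j}-\beta^*_{\A_j}\|_\infty\le\|D\|_\infty\|\eta^j_{\A_j}-\eta^*_{\A_j}\|_\infty$ and $\|\eta^*_{\B}\|_\infty\le\|D^{-1}\|_\infty\|\beta^*_{\B}\|_\infty$, together with $\|v\|\le\sqrt{|\A_j|}\,\|v\|_\infty\le\sqrt{K}\,\|v\|_\infty$ for the $\ell_2$ version.

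For part (2), I would argue by induction on the inner iteration index $j$ within the $\lambda_k$-subproblem, with inductive hypothesis $\A_{j-1}\subseteq\A^*$ together with a capture condition of the form $\T_{\lambda_k,s_1}\subseteq\A_{j-1}$. The workhorse is the residual identity $\bar{Y}-\bar{X}_{\A_{j-1}}\eta^{j-1}_{\A_{j-1}}=\bar{X}_{\A_{j-1}}(\eta^*_{\A_{j-1}}-\eta^{j-1}_{\A_{j-1}})+\bar{X}_{\B_{j-1}}\eta^*_{\B_{j-1}}+\gamma$ with $\B_{j-1}=\A^*\backslash\A_{j-1}$. For the upper containment $\A_j\subseteq\A^*$, I would take $i\in\I^*$, note $\eta^{j-1}_i=0$ so that $|\eta^{j-1}_i+d^{j-1}_i|=|d^{j-1}_i|$, expand $d^{j-1}_i=\bar{X}_i^{\top}(\cdots)$ from the residual identity, and bound each piece by Lemma \ref{mipl}(1) and the part-(1) estimate for $\eta^{j-1}_{\A_{j-1}}-\eta^*_{\A_{j-1}}$, arriving at $|d^{j-1}_i|\le\sqrt{2\lambda_k}$. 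For the lower containment, I would take $i\in\T_{\lambda_k,s_2}$ and split on whether $i\in\A_{j-1}$: if yes, $d^{j-1}_i=0$ and $|\eta^{j-1}_i|\ge|\eta^*_i|-\|\eta^{j-1}_{\A_{j-1}}-\eta^*_{\A_{j-1}}\|_\infty$; if no, then $i\in\B_{j-1}$ and the diagonal contribution $\bar{X}_i^{\top}\bar{X}_i\,\eta^*_i=\eta^*_i$ (the columns of $\bar{X}=\tilde{X}D$ are unit-normalized) dominates, so $|d^{j-1}_i|\ge|\eta^*_i|-(\text{MIP remainder})$. In both cases $|\eta^*_i|\ge\sqrt{2\lambda_k}\,s_2$ should push the quantity past $\sqrt{2\lambda_k}$, placing $i$ in $\A_j$.

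I would then close part (2) by checking the constant bookkeeping. The window $\big(1/(1-K\nu+\nu-\alpha),\,1/(K\nu+\alpha)\big)$ is nonempty precisely because $(2K-1)\nu+2\alpha<1$, which is the standing hypothesis $\nu<(1-2\alpha)/(2K-1)$; the lower bound $\rho>((2K-1)\nu+2\alpha)^2$ controls the threshold ratio $\sqrt{2\lambda_{k-1}}/\sqrt{2\lambda_k}=1/\sqrt{\rho}$, so that indices captured at level $\lambda_{k-1}$ survive the shrink to $\lambda_k$, letting the capture condition regenerate from $s_1$ to $s_2$. Because $\A_j$ grows monotonically while staying inside the finite set $\A^*$, the inner loop stabilizes in finitely many steps and produces $\A(\lambda_k)$ with $\T_{\lambda_k,s_2}\subseteq\A(\lambda_k)\subseteq\A^*$. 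Finite outer termination then follows from the discrepancy test: once $\A(\lambda_k)=\A^*$ the primal update returns $\eta^o$, whose residual $\bar{Y}-\bar{X}\eta^o$ is the projection of $\gamma$ onto $(\mathrm{range}\,\bar{X}_{\A^*})^{\perp}$ and hence has norm $\le\bar{\epsilon}$, triggering the stop.

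For part (3), with $\lambda\in(\bar{\epsilon}^2/2,\xi)$ I would first show $\T_{\lambda,s_2}=\A^*$: the bound $\lambda<\xi$, after inserting an admissible $s_2$ from the window above, forces $\sqrt{2\lambda}\,s_2\le\min_{i\in\A^*}|\eta^*_i|$, so every true index qualifies as ``large''. Feeding this into the monotonicity of part (2) gives the sandwich $\A^*=\T_{\lambda,s_2}\subseteq\A(\lambda)\subseteq\A^*$, whence $\A(\lambda)=\A^*$ and the primal variable is exactly $\eta^o$; the lower bound $\lambda>\bar{\epsilon}^2/2$ (consistent with the initialization $\lambda_0\ge\tfrac12\|\bar{X}^{\top}\bar{Y}\|_\infty^2$) guarantees the threshold never drops far enough to admit a noise index. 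I expect the main obstacle to lie in the constant calibration of part (2): the chain of MIP estimates feeding the comparison with $\sqrt{2\lambda_k}$ must be made uniform in $j$ and simultaneously compatible with both containments, and it is exactly there that the precise forms of the hypotheses on $\nu$, $\alpha$, and $\rho$ — and the admissible range for $s_1>s_2$ — are consumed; keeping the induction's capture condition self-propagating across the $\lambda$-update (so that $\T_{\lambda_k,s_1}\subseteq\A(\lambda_{k-1})$ regenerates into $\T_{\lambda_k,s_2}\subseteq\A(\lambda_k)$) is the delicate step.
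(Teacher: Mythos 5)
Your treatment of parts (1) and (2) follows essentially the same route as the paper: the same cancellation identity $\eta^j_{\A_j}-\eta^*_{\A_j}=(\bar{X}_{\A_j}^{\top}\bar{X}_{\A_j})^{-1}(\bar{X}_{\A_j}^{\top}\bar{X}_{\B}\eta^*_{\B}+\bar{X}_{\A_j}^{\top}\gamma)$ with Lemma \ref{mipl}(1), the same inner induction ``$\T_{\lambda,s_1}\subseteq\A_j\subseteq\A^*\Rightarrow\T_{\lambda,s_2}\subseteq\A_{j+1}\subseteq\A^*$'' via lower/upper bounds on $d^j_i$ for $i\in\B$ versus $i\in\I^*$, and the same calibration $s_2/s_1=\sqrt{\rho}$ so that $\T_{\lambda_{k+1},s_1}=\T_{\lambda_k,s_2}$ propagates the capture condition across $\lambda$-levels. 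Your finite-termination argument (once $\A(\lambda_k)=\A^*$ the residual is the projection of $\gamma$ onto $(\mathrm{range}\,\bar{X}_{\A^*})^{\perp}$, hence $\le\bar{\epsilon}$) is in fact cleaner than the paper's, which detours through monotonicity of $\|\eta_\lambda\|_0$ for exact minimizers.

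Part (3), however, has a genuine gap. You propose to deduce $\A(\lambda)=\A^*$ from $\T_{\lambda,s_2}=\A^*$, which requires $2\lambda s_2^2\le\min_{i\in\A^*}|\eta^*_i|^2$, i.e.\ $s_2^2\le K/(1-2(K-1)\nu-2\alpha-\alpha^2)$ for $\lambda$ up to $\xi$. But $s_2$ is not free: it is pinned down by $\rho$ through $s_2=1+(K\nu-\nu+\alpha)s_1$ and $s_2/s_1=\sqrt{\rho}$, and this constraint is exactly what makes the level-to-level induction of part (2) work, so you cannot re-choose $s_2$ near the lower endpoint of its window for part (3). For $\rho$ near the lower end of its admissible range, $s_2$ is close to $((2K-1)\nu+2\alpha)/(K\nu+\alpha)$ and the required inequality fails (e.g.\ $K=1$, $\alpha=0.2$, $\nu=0.1$ forces $s_2\le1.34$, yet any $\rho<0.63$ in the admissible interval $(0.25,1)$ yields $s_2>1.34$). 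The paper avoids this entirely with a different mechanism: assume the algorithm stops with $\hat{\A}\subsetneq\A^*$, lower-bound the objective $L(\hat{\eta})$ by isolating the largest missed coordinate $i_{\A^*}$, and show $L(\hat{\eta})-L(\eta^o)\ge(\xi-\lambda)K>0$, which together with $|\hat{\A}|<K$ forces $\|\bar{Y}-\bar{X}\hat{\eta}\|>\bar{\epsilon}$ and contradicts the stopping rule. That objective-comparison contradiction is the missing idea; it is also what rules out the algorithm halting early at a proper subset of $\A^*$ with a spuriously small residual, a scenario your threshold-capture argument does not address.
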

\begin{proof}
	(1) It can be seen from the iterative process of algorithm AFT-PDASC that $\eta^j_{\A_j}=(\bar{X}_{\A_j}^{\top}\bar{X}_{\A_j})^{-1}\bar{X}_{\A_j}^{\top}\bar{Y}$. Recalling that $\bar{Y}=\bar{X}_{\A^*}\eta^*_{\A^*}+\gamma$, we have
	\begin{align*}
		\eta^j_{\A_j} -\eta_{\A_j}^*=\Big(\bar{X}_{\A_j}^{\top} \bar{X}_{\A_j}\Big)^{-1} \bar{X}_{\A_j}^{\top}\Big(\bar{X}_{\A^*}\eta^*_{\A^*}+\gamma-\bar{X}_{\A_j} \eta_{\A_j}^*\Big)=\Big(\bar{X}_{\A_j}^{\top} \bar{X}_{\A_j}\Big)^{-1} \bar{X}_{\A_j}^{\top}\Big(\bar{X}_{\B}\eta^*_{\B}+\gamma\Big).
	\end{align*}
	From the fact taht $\nu<\frac{1-2\alpha}{2K-1}$, $|\A_j|\le K$, and $0\le\alpha<\frac{1}{2}$, we can easily get that $(|\A_j|-1)\nu<1$.
	Then, using the inequalities in Lemma \ref{mipl}, we can further obtain that
	\begin{align*}
		\big\|\eta^j_{\A_j} -\eta_{\A_j}^*\big\|_{\infty}\leq \frac{1}{1-(|\A_j|-1)\nu}\Big(\|\bar{X}_{\A_j}^{\top} \bar{X}_{\B} \eta_{\B}^*\|_{\infty}+\|\bar{X}_{\A_j}^{\top} \gamma\|_{\infty}\Big)
		\le \frac{1}{1-(|\A_j|-1)\nu}\Big(\nu|\B|\cdot\|\eta^*_{\B}\|_{\infty}+\bar{\epsilon}\Big).
	\end{align*}
Additionally, based on the algebraic relationship between $\beta$ and $\eta$, we have
\begin{align*}
	\big\|\beta^j_{\A_j} -\beta_{\A_j}^*\big\|_{\infty}
	\leq\|D\|_{\infty} \big\|\eta^j_{\A_j} -\eta_{\A_j}^*\big\|_{\infty}\le \frac{\|D\|_{\infty}}{1-(|\A_j|-1)\nu}\Big(\nu|\B|\cdot\|D^{-1}\|_{\infty}\cdot\|\beta^*_{\B}\|_{\infty}+\bar{\epsilon}\Big),
\end{align*}
and
\begin{align*}
	\big\|\beta^j_{\A_j} -\beta_{\A_j}^*\big\|\leq \sqrt{|\A_j|}\big\|\beta^j_{\A_j} -\beta_{\A_j}^*\big\|_{\infty}\le \frac{\sqrt{K}\|D\|_{\infty}}{1-(|\A_j|-1)\nu}\Big(\nu|\B|\cdot\|D^{-1}\|_{\infty}\cdot\|\beta^*_{\B}\|_{\infty}+\bar{\epsilon}\Big).
\end{align*}

(2) From the fact $\nu<\frac{1-2\alpha}{2K-1}$, we know that $(2K-1)\nu+2\alpha<1$, which leads to $K \nu+\alpha<1-K \nu+\nu-\alpha$. Then, for any $s_1\in\big(\frac{1}{1-K \nu+\nu-\alpha},\frac{1}{K \nu+\alpha}\big)$, we have
$$
s_1>1+(K\nu-\nu+\alpha)s_1, \ \text{and} \ 1+(K\nu-\nu+\alpha)s_1>\frac{1}{1-K \nu+\nu-\alpha},
$$
i.e., $\frac{1}{1-K \nu+\nu-\alpha}<1+(K\nu-\nu+\alpha)s_1<s_1<\frac{1}{K \nu+\alpha}$.
Let $s_2:=1+(K\nu-\nu+\alpha)s_1$, then we have that $s_1, s_2 \in(1/(1-K \nu+\nu-\alpha), 1/(K \nu+\alpha))$ and that $s_1>s_2$.
Define the function $f(s_1):={s_2}/{s_1}=\big(1+(K\nu-\nu+\alpha)s_1\big)/{s_1}$, which has a monotonic decreasing property within the interval $\big(\frac{1}{1-K \nu+\nu-\alpha},\frac{1}{K \nu+\alpha}\big)$, then we have
$$
f\Big(\frac{1}{K \nu+\alpha}\Big)<f(s_1)<f\Big(\frac{1}{1-K \nu+\nu-\alpha}\Big),
$$ that is, $(2K-1)\nu+2\alpha<f(s_1)<1$.
This means that for any $\rho\in\big(((2K-1)\nu+2\alpha)^2,1\big)$, there exists a $s_1$ within the interval $\big(\frac{1}{1-K \nu+\nu-\alpha},\frac{1}{K \nu+\alpha}\big)$ such that ${s_2}/{s_1}=\sqrt{\rho}$.

In light of these analysis, for $k=0,1,2,\ldots,K_{\max}$, we now prove the following special monotonicity:
$$
\text{If} \ \T_{\lambda_k,s_1}\subseteq \A(\lambda_{k-1})\subseteq \A^*,\quad \text{then} \ \T_{\lambda_k,s_2}\subseteq \A(\lambda_{k})\subseteq \A^*.
$$
First, we consider the case of $k = 0$. For convenience, we define $\A(\lambda_{-1})=\A(\lambda_{0}):=\emptyset$. From Lemma \ref{mipl}, it holds that
$$
\|\eta^*\|_{\infty}=\|\eta_{\A^*}^*\|_{\infty}=\|(\bar{X}_{\A^*}^{\top}\bar{X}_{\A^*})^{-1}
(\bar{X}_{\A^*}^{\top}\bar{X}_{\A^*})\eta_{\A^*}^*\|_{\infty}\leq \frac{\|(\bar{X}_{\A^*}^{\top}\bar{X}_{\A^*})\eta_{\A^*}^*\|_{\infty}}{1-(K-1)\nu}.
$$
Furthermore, it also holds that
$$
\|\bar{X}^{\top}\bar{Y}\|_{\infty}\geq\|(\bar{X}_{\A^*}^{\top}\bar{X}_{\A^*})\eta_{\A^*}^*\|_{\infty}
-\|\bar{X}^{\top}\gamma\|_{\infty}\geq\big(1-(K-1)\nu\big)\|\eta^*\|_{\infty}-\bar{\epsilon}.
$$
Given $s_2>\frac{1}{1-K \nu+\nu-\alpha}$, we have $1-(K-1)\nu>{1}/{s_2}+\alpha$.
Additionally, using Assumption \ref{assum2}, we can obtain that $\|\bar{X}^{\top}\bar{Y}\|_{\infty}>\frac{1}{s_2}\|\eta^*\|_{\infty}$. This indicates that $\|\eta^*\|_{\infty}<s_2\|\bar{X}^{\top}\bar{Y}\|_{\infty}$.
Using the fact that $\lambda_0\geq\frac{1}{2}\|\bar{X}^{\top}\bar{Y}\|_{\infty}^2$ and the definition of $\T_{\lambda,s}$, we can get $\T_{\lambda_0,s_1}=\emptyset$ and $\T_{\lambda_0,s_2}=\emptyset$.
Therefore, it holds that $\T_{\lambda_0,s_1}\subseteq \A(\lambda_{-1})\subseteq \A^*$ and that $\T_{\lambda_0,s_2}\subseteq \A(\lambda_{0})\subseteq \A^*$.

Assume that the monotonicity \eqref{col22}  holds when $k = l$, that is, ``if $ \T_{\lambda_l,s_1}\subseteq \A(\lambda_{l-1})\subseteq \A^*$, then $\T_{\lambda_l,s_2}\subseteq \A(\lambda_{l})\subseteq \A^*$". Then, we will show that this assertion also holds when $k = l + 1$.
From the relations $\lambda_{l+1}=\rho\lambda_l$ and $\rho=s_2^2/s_1^2$, we know that $\T_{\lambda_{l+1},s_1}=\T_{\lambda_l,s_2}$. Based on the result for $k = l$, we have $\T_{\lambda_{l+1},s_1}\subseteq \A(\lambda_{l})\subseteq \A^*$. Therefore, it remains to prove $\T_{\lambda_{l+1},s_2}\subseteq \A(\lambda_{l+1})\subseteq \A^*$.
Note that $\A(\lambda_{l})$ and $\A(\lambda_{l+1})$ are essentially the initial guess and final output of the active set in the $\lambda_{l+1}$-problem, respectively. Therefore,  we only need to demonstrate that the following assertion holds during the inner iterations of the $\lambda_{l+1}$-problem, i.e.,
      \begin{align}\label{lps1}
      \text{If} \ \T_{\lambda_{l+1},s_1}\subseteq \A_j\subseteq \A^*, \quad \text{then} \ \T_{\lambda_{l+1},s_2}\subseteq \A_{j+1}\subseteq \A^*.
      \end{align}

From the relation $\eta^j_{\A_j}=\eta^j_{\A_j}-\eta^*_{\A_j}+\eta^*_{\A_j}$ for any $i\in \A_j$, we get
$$|\eta^j_i|\geq |\eta^*_i|-\|\eta^{j}_{\A_j} -\eta_{\A_j}^* \|_{\infty}\geq |\eta^*_i|-\frac{\nu|\B|\cdot\|\eta_{\B}^*\|_{\infty}+\bar{\epsilon}}{1-(|\A_j|-1) \nu}.$$
Additionally, from the update formula of the dual variable $d$, it can be seen that
\begin{align*}
	d_i^j&=\bar{X}^{\top}_i\Big(\bar{Y}-\bar{X}_{\A_j}\eta^j_{\A_j}\Big)=
	\bar{X}^{\top}_i\Big(\bar{X}_{\A^*}\eta^*_{\A^*}+\gamma-\bar{X}_{\A_j}\eta^j_{\A_j}\Big) \\
	&=\bar{X}^{\top}_i\Big(\bar{X}_{\A^*}\eta^*_{\A^*}-\bar{X}_{\A_j}\eta^*_{\A_j}+\bar{X}_{\A_j}\eta^*_{\A_j}
	+\gamma-\bar{X}_{\A_j}\eta^j_{\A_j}\Big)\\
	&=\bar{X}^{\top}_i\Big(\bar{X}_{\B}\eta^*_{\B}+\gamma-\bar{X}_{\A_j}(\eta^j_{\A_j}-\eta^*_{\A_j})\Big).
\end{align*}
Then, for any $i\in \B$, using the result in (1), Lemma \ref{mipl}, and $\bar{X}^{\top}_i\bar{X}_i=1$, we know that
\begin{align}\label{d1}
	|d^j_i|&=\Big|\bar{X}^{\top}_i\bar{X}_i\eta^*_i+\bar{X}^{\top}_i\Big(\bar{X}_{\B\backslash \{i\}}\eta^*_{\B \backslash \{i\}}+\gamma-\bar{X}_{\A_j}(\eta^j_{\A_j}-\eta^*_{\A_j})\Big)\Big|\notag\\
	&\geq |\eta^*_i|-|\bar{X}^{\top}_i\bar{X}_{\B\backslash \{i\}}\eta^*_{\B \backslash \{i\}}|-|\bar{X}^{\top}_i\gamma|-|\bar{X}^{\top}_i\bar{X}_{\A_j}(\eta^j_{\A_j}-\eta^*_{\A_j})|\notag\\
	&\geq |\eta^*_i|-(|\B|-1)\nu\|\eta^*_{\B}\|_{\infty}-\bar{\epsilon}-|\A_j|\nu\|\eta^j_{\A_j}-\eta^*_{\A_j}\|_{\infty}\notag\\
	&\geq |\eta^*_i|+\nu\|\eta^*_{\B}\|_{\infty}-|\B|\nu\Big(1+\frac{|\A_j|\nu}{1-(|\A_j|-1)\nu}\Big)\|\eta^*_{\B}\|_{\infty}
	-\bar{\epsilon}\Big(1+\frac{|\A_j|\nu}{1-(|\A_j|-1)\nu}\Big).
\end{align}
Similarly, for any $i\in\I^*$, we have
\begin{align}\label{d2}
	|d^j_i|&\leq |\bar{X}^{\top}_i\bar{X}_{\B}\eta^*_{\B}|+|\bar{X}^{\top}_i\gamma|+|\bar{X}^{\top}_i\bar{X}_{\A_j}
	(\eta^j_{\A_j}-\eta^*_{\A_j})|\notag\\
	&\leq|\B|\nu\|\eta^*_{\B}\|_{\infty}+\bar{\epsilon}+|\A_j|\nu\|\eta^j_{\A_j}-\eta^*_{\A_j}\|_{\infty}\notag\\
	&\leq |\B|\nu\Big(1+\frac{|\A_j|\nu}{1-(|\A_j|-1)\nu}\Big)\|\eta^*_{\B}\|_{\infty}
	+\bar{\epsilon}\Big(1+\frac{|\A_j|\nu}{1-(|\A_j|-1)\nu}\Big).
\end{align}
From $|\A_j|=K-|\B|$, $\frac{|\B|\nu+\alpha}{1-K\nu+\nu+|B|\nu} \leq\frac{K \nu+\alpha}{1+\nu}$, and $\bar{\epsilon} \leq \alpha \min _{i \in \A^*}|\eta_i^*| \leq \alpha\|\eta_{\B}^*\|_{\infty}$, we get
\begin{align*}
	&|\B|\nu\Big(1+\frac{|\A_j|\nu}{1-(|\A_j|-1)\nu}\Big)\|\eta^*_{\B}\|_{\infty}
	+\bar{\epsilon}\Big(1+\frac{|\A_j|\nu}{1-(|\A_j|-1)\nu}\Big)\\
	\leq &\frac{|\B| \nu+\alpha}{1-K \nu+\nu+|\B| \nu}(1+\nu)\|\eta_{\B}^*\|_{\infty}\leq (K \nu+\alpha)\|\eta_{\B}^*\|_{\infty}.
\end{align*}
Therefore, it can be further concluded from (\ref{d1}) and (\ref{d2}) that
\begin{align*}
	|d^j_i|\geq|\eta^*_i|-(K\nu-\nu+\alpha)\|\eta_{\B}^*\|_{\infty}, \ \forall i\in \B, \quad\text{and}\quad
	|d^j_i|\leq(K \nu+\alpha)\|\eta_{\B}^*\|_{\infty}, \ \forall i\in\I^*.
\end{align*}
Using $\T_{\lambda_{l+1},s_1}\subseteq \A_j$, we get $\|\eta_{\B}^*\|_{\infty}<s_1\sqrt{2\lambda_{l+1}}$.
Next, we will prove $\T_{\lambda_{l+1},s_2}\subseteq \A_{j+1}$.
For any $i\in \I_j \cap \T_{\lambda_{l+1},s_2}$, it holds that
$$|d_i^j|>s_2 \sqrt{2 \lambda_{l+1}}-(K\nu-\nu+\alpha)s_1 \sqrt{2 \lambda_{l+1}}=[s_2-(K\nu-\nu+\alpha)s_1]\sqrt{2 \lambda_{l+1}} = \sqrt{2 \lambda_{l+1}},$$
which means that $i \in \A_{j+1}$. For any $i\in \A_j \cap \T_{\lambda_{l+1},s_2}$, from the relation $\frac{|\B| \nu+\alpha}{1-(|\A_j|-1) \nu}\leq\frac{|\B| \nu+\alpha+(|\A_j|-1) \nu}{1-(|\A_j|-1) \nu+(|\A_j|-1) \nu}=K \nu-\nu+\alpha$, we can deduce that
$$|\eta_i^j| \geq|\eta_i^*|-\frac{|\B| \nu+\alpha}{1-(|\A_j|-1) \nu}\|\eta_{\B}^*\|_{\infty}>s_2 \sqrt{2 \lambda_{l+1}}-(K \nu-\nu+\alpha) s_1 \sqrt{2 \lambda_{l+1}}=\sqrt{2 \lambda_{l+1}},$$
which means $i \in \A_{j+1}$.
Therefore, $\T_{\lambda_{l+1},s_2}\subseteq \A_{j+1}$.
Then, we get conclusion $\A_{j+1}\subseteq\A^*$ by proving $\I^*\subseteq\I_{j+1}$. For all $i\in \I^*$, we have $|d^j_i|<s_1(K \nu+\alpha)\sqrt{2 \lambda_{l+1}}<\sqrt{2 \lambda_{l+1}}$, which means $i \in \I_{j+1}$.
By this point, we have proven the assertion (\ref{lps1}).

 Furthermore, we have proven that, for $k=0,1,2,\ldots,K_{\max}$, the active sets ${\A(\lambda_k)}$ have the following special monotonicity:
$$
\text{If} \ \T_{\lambda_k,s_1}\subseteq \A(\lambda_{k-1})\subseteq \A^*,\quad \text{then} \ \T_{\lambda_k,s_2}\subseteq \A(\lambda_{k})\subseteq \A^*.
$$
From the above conclusion, it is clear that the active set ${\A(\lambda_k)}$ is always contained within $\A^*$. Additionally, for (\ref{2model2}), we define $\eta_{\lambda_1}$ and $\eta_{\lambda_2}$ as the optimal solutions corresponding to $\lambda_1$ and $\lambda_2$, respectively. Here, we assume that $\lambda_1>\lambda_2$. Then, we have
\begin{align*}
	&\frac{1}{2} \|\bar{Y}-\bar{X} \eta_{\lambda_1}\|^2 +\lambda_1\|\eta_{\lambda_1}\|_0\leq\frac{1}{2} \|\bar{Y}-\bar{X} \eta_{\lambda_2}\|^2 +\lambda_1\|\eta_{\lambda_2}\|_0,\\
	&\frac{1}{2} \|\bar{Y}-\bar{X} \eta_{\lambda_2}\|^2 +\lambda_2\|\eta_{\lambda_2}\|_0\leq\frac{1}{2} \|\bar{Y}-\bar{X} \eta_{\lambda_1}\|^2 +\lambda_2\|\eta_{\lambda_1}\|_0.
\end{align*}
Adding both sides of the two inequalities above, we can obtain $\lambda_1\|\eta_{\lambda_1}\|_0+\lambda_2\|\eta_{\lambda_2}\|_0\leq\lambda_1\|\eta_{\lambda_2}\|_0+\lambda_2\|\eta_{\lambda_1}\|_0$, i.e., $(\lambda_1-\lambda_2)(\|\eta_{\lambda_1}\|_0-\|\eta_{\lambda_2}\|_0)\leq0$. Therefore, we get $\|\eta_{\lambda_1}\|_0\leq\|\eta_{\lambda_2}\|_0$. This indicates that as $k$ increases, $\lambda_k$ continuously decreases, and ${\A(\lambda_k)}$ will eventually either be $\A^*$ itself or a proper subset of $\A^*$.
In a word, both cases  lead to the termination criterion in the $12$-th line of AFT-PDASC is satisfied, so that the algorithm is terminated in finite steps.

(3) From the proof process we know that, when AFT-PDASC terminates, the estimated active set $\hat{\A}$ is either $\A^*$ itself or a proper subset of $\A^*$. Next, we will proceed by contradiction to prove that the final active set must be exactly $\A^*$ itself but cannot be a proper subset.
We assume that $\hat{\A}\subsetneqq \A^*$, and then $\B:=\A^*\backslash \hat{\A}$ is non-empty.
Let $\hat{\I}:=\S \setminus \hat{\A}$. Noting that $\hat{\eta}$ represents the estimated regression coefficient. Then, we have
\begin{align*}
	L(\hat{\eta})=\frac{1}{2} \|\bar{Y}-\bar{X} \hat{\eta}\|^2 +\lambda\|\hat{\eta}\|_0
	%&=\frac{1}{2} \|\bar{X}_{\A^*}\eta^*_{\A^*}+\gamma-\bar{X}_{\hat{\A}} \hat{\eta}_{\hat{\A}}\|^2 +\lambda|\hat{\A}|\\
	=\frac{1}{2} \|\bar{X}_{\B}\eta^*_{\B}+\gamma-\bar{X}_{\hat{\A}} (\hat{\eta}_{\hat{\A}}-\eta^*_{\hat{\A}})\|^2 +\lambda|\hat{\A}|.
\end{align*}
Let $i_{\A^*}\in \{i\in\hat{\I} : |\eta^*_i|=\|\eta^*_{\B}\|_{\infty}\}$, then it is easy to see that $i_{\A^*}\in \B$ and $|\eta^*_{i_{\A^*}}|=\|\eta^*_{\B}\|_{\infty}$. Furthermore, we have
\begin{align*}
	L(\hat{\eta})&=\frac{1}{2} \|\bar{X}_{i_{\A^*}}\eta^*_{i_{\A^*}}+\bar{X}_{\B\backslash \{i_{\A^*}\}}\eta^*_{\B\backslash\{i_{\A^*}\}}+\gamma-\bar{X}_{\hat{\A}} (\hat{\eta}_{\hat{\A}}-\eta^*_{\hat{\A}})\|^2 +\lambda|\hat{\A}|\\
	&\geq \frac{1}{2}|\eta^*_{i_{\A^*}}|^2-|\eta^*_{i_{\A^*}}|\Big( |\langle\bar{X}_{i_{\A^*}},\bar{X}_{\B\backslash \{i_{\A^*}\}}\eta^*_{\B\backslash\{i_{\A^*}\}}\rangle|+|\langle\bar{X}_{i_{\A^*}},\gamma\rangle| +|\langle\bar{X}_{i_{\A^*}},\bar{X}_{\hat{\A}} (\hat{\eta}_{\hat{\A}}-\eta^*_{\hat{\A}})\rangle|\Big)+\lambda|\hat{\A}|\\
	&\geq\frac{1}{2}|\eta^*_{i_{\A^*}}|^2-|\eta^*_{i_{\A^*}}|\Big((|\B|-1)\nu|\eta^*_{i_{\A^*}}|+\bar{\epsilon} +|\hat{\A}|\nu\|\hat{\eta}_{\hat{\A}}-\eta^*_{\hat{\A}}\|_{\infty} \Big)+\lambda|\hat{\A}|\\
	&\geq \frac{1}{2}|\eta^*_{i_{\A^*}}|^2-|\eta^*_{i_{\A^*}}|\Big((|\B|-1)\nu|\eta^*_{i_{\A^*}}|+\bar{\epsilon} +\frac{|\hat{\A}|\nu}{1-(|\hat{\A}|-1)\nu}(|\B|\nu|\eta^*_{i_{\A^*}}|+\bar{\epsilon}) \Big)+\lambda|\hat{\A}|.
\end{align*}
In addition, using Assumption \ref{assum2}, we have $\bar{\epsilon}\le\alpha \min _{i \in \A^*}\{|\eta_i^*|\} \leq \alpha|\eta_{i_{\A^*}}^*|$. Then, we can deduce that
\begin{align*}
	L(\hat{\eta})&\geq\frac{1}{2}|\eta^*_{i_{\A^*}}|^2-|\eta^*_{i_{\A^*}}|\Big((|\B|-1)\nu|\eta^*_{i_{\A^*}}|+
	\alpha|\eta_{i_{\A^*}}^*| +\frac{|\hat{\A}|\nu}{1-(|\hat{\A}|-1)\nu}(|\B|\nu|\eta^*_{i_{\A^*}}|+\alpha|\eta_{i_{\A^*}}^*|) \Big)+\lambda|\hat{\A}|\\
	&=|\eta^*_{i_{\A^*}}|^2 \Big( \frac{1}{2}-(|\B|-1)\nu-\alpha-\frac{|\hat{\A}|\nu(|\B|\nu+\alpha)}{1-(|\hat{\A}|-1)\nu} \Big)+\lambda|\hat{\A}|\\
	&=|\eta^*_{i_{\A^*}}|^2 \Big( \frac{1}{2}-(K-1)\nu-\alpha \Big)+|\eta^*_{i_{\A^*}}|^2|\hat{\A}|\nu\Big(1-\frac{|\B|\nu+\alpha}{1-(|\hat{\A}|-1)\nu}\Big)+\lambda|\hat{\A}|.
\end{align*}
Here, $1-\frac{|\B|\nu+\alpha}{1-(|\hat{\A}|-1)\nu}=\frac{1-\big((|\hat{\A}|+|\B|-1)\nu+\alpha\big)}{1-(|\hat{\A}|-1)\nu}$.
It has been analyzed earlier that $(|\hat{\A}|-1)\nu<1$, and $(|\hat{\A}|+|\B|-1)\nu+\alpha<(2K-1)\nu+2\alpha<1$, so we get $1-\frac{|\B|\nu+\alpha}{1-(|\hat{\A}|-1)\nu}>0$. Therefore, we can conclude that $L(\hat{\eta})\geq|\eta^*_{i_{\A^*}}|^2 \Big( \frac{1}{2}-(K-1)\nu-\alpha \Big)$.

Next, we will first explain the reasonableness of the setting for the value range of $\lambda$. From $0\leq \alpha<\frac{1}{2}$, we have $\alpha^2<\alpha$. With the relation of $\alpha\le\frac{1-2(K-1)\nu}{K+3}$, we can easily obtain $(K+1) \alpha^2+2 \alpha<(K+3) \alpha \leq 1-2(K-1) \nu$. Furthermore, it has $K \alpha^2<1-2(K-1) \nu-2 \alpha-\alpha^2$.
It is clear from the definition of $\xi$ and Assumption \ref{assum2} that $\xi>\frac{\alpha^2}{2} \min _{i \in \A^*}|\eta_i^*|^2\geq \bar{\epsilon}^2/2$. Therefore, the setting of the value range of $\lambda$ is reasonable.
Based on the value of $\lambda$, we have
\begin{align*}
	L(\hat{\eta})-L(\eta^o)&\geq |\eta^*_{i_{\A^*}}|^2 \Big( \frac{1}{2}-(K-1)\nu-\alpha \Big)-\frac{1}{2}\bar{\epsilon}^2-\lambda K\\
	&\geq \Big( \frac{1}{2}-(K-1)\nu-\alpha-\frac{\alpha^2}{2} \Big)\min _{i \in \A^*}\{|\eta_i^*|^2\}-\lambda K\\
	&=\xi K-\lambda K>0.
\end{align*}
In other words, it gets $\frac{1}{2}\|\bar{Y}-\bar{X}\hat{\eta}\|^2+\lambda|\hat{\A}|>\frac{1}{2}\bar{\epsilon}^2+\lambda K$.
Since $|\hat{\A}|<K$, then it yields $\|\bar{Y}-\bar{X}\hat{\eta}\|>\bar{\epsilon}$. This contradicts the stopping condition in $12$-th line of  AFT-PDASC, so the previous assumption is incorrect. In other words, $\hat{\A}$ cannot be a proper subset of $\A^*$, it must be $\A^*$ itself. This also indicates that AFT-PDASC terminates at the oracle solution $\eta^o$.
\end{proof}

At the end of this section, using the Assumption \ref{assum} (2) that the covariate matrix $\bar{X}$ satisfies the RIP condition, we also present the error analysis, the monotonicity of the active set, and the finite-step termination of AFT-PDAS at the oracle solution $\eta^o$. Since the proof process is similar to that of Theorem \ref{th1}, we omit it here.

\begin{theorem}\label{th2}
	Suppose that Assumption \ref{assum2} holds. Let $\delta:=\delta_{K+1}<\frac{1-2\alpha}{2\sqrt{K}+1}$, then for any $\rho \in\left((\frac{2\delta\sqrt{K}+2\alpha}{1-\delta})^2, 1\right)$, it holds that:
	\begin{itemize}
		\item[(1)] In the $\lambda_k$ subproblem, for $j=1,2,\ldots,J_{\max}$, we set $\B:=\A^* \backslash \A_j $ and $\I_j=\S \backslash \A_j$. If $|\A_j|\le K$, then we have
		$$
		\begin{aligned}
			\left\|\eta^j_{\A_j} -\eta_{\A_j}^* \right\| & \leq \frac{\delta_{|\A_j|+|\B|}}{1-\delta_{|\A_j|}}\left\|\eta_{\B}^*\right\|+\frac{1}{\sqrt{1-\delta_{|\A_j|}}} \bar{\epsilon}, \\
			\left\|\beta^j_{\A_j} -\beta_{\A_j}^* \right\| & \leq \frac{\|D\|\delta_{|\A_j|+|\B|}}{1-\delta_{|\A_j|}}\left\|D^{-1}\beta_{\B}^*\right\|+\frac{\|D\|}{\sqrt{1-\delta_{|\A_j|}}} \bar{\epsilon}.
		\end{aligned}
		$$
		\item[(2)] For $k=0,1,2,\ldots,K_{\max}$, there exist $s_1, s_2 \in(\frac{ 1-\delta}{1-\delta- \delta\sqrt K- \alpha}, \frac{ 1-\delta}{\delta\sqrt K+\alpha})$ with $s_1>s_2$, such that the active sets ${\A(\lambda_k)}$ have the following special monotonicity:
		$$ \text{If} \ \T_{\lambda_k,s_1}\subseteq \A(\lambda_{k-1})\subseteq \A^*, \quad\text{then} \ \T_{\lambda_k,s_2}\subseteq \A(\lambda_{k})\subseteq \A^*.$$
		Furthermore, AFT-PDASC   terminates in a finite number of steps.
		\item[(3)] Let $\alpha \leq(1-2\delta-\delta^2) /4$ and $\xi=( \frac{1-\delta}{2}-\frac{\delta^2}{1-\delta}-\frac{\alpha}{\sqrt{1-\delta}}-\frac{1}{2} \alpha^2) \min _{i \in \A^*}\{|\eta_i^*|^2\}$. Then, for any $\lambda \in(\frac{\bar{\epsilon}^2}{2}, \xi)$, AFT-PDASC terminates at the oracle solution $\eta^o$.
	\end{itemize}	
\end{theorem}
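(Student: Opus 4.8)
The plan is to follow the three-part architecture of the proof of Theorem \ref{th1} essentially verbatim, replacing every mutual-coherence ($\ell_\infty$) estimate of Lemma \ref{mipl}(1) by the corresponding restricted-isometry ($\ell_2$) estimate of Lemma \ref{mipl}(2), and collapsing the various RIP constants $\delta_{|\A_j|},\delta_{|\A_j|+|\B|},\delta_{|\B|+1},\delta_{|\A_j|+1}$ to the single $\delta=\delta_{K+1}$ via $|\A_j|\le K$, $|\B|\le K$ and $\delta_s\le\delta_{s'}$ for $s<s'$. For part~(1) I would start from the same identity $\eta^j_{\A_j}-\eta^*_{\A_j}=(\bar{X}_{\A_j}^{\top}\bar{X}_{\A_j})^{-1}\bar{X}_{\A_j}^{\top}(\bar{X}_{\B}\eta^*_{\B}+\gamma)$ and apply $\|\bar{X}_{\A_j}^{\top}\bar{X}_{\B}\|\le\delta_{|\A_j|+|\B|}$, $\|(\bar{X}_{\A_j}^{\top}\bar{X}_{\A_j})^{-1}v\|\le\|v\|/(1-\delta_{|\A_j|})$ and $\|(\bar{X}_{\A_j}^{\top}\bar{X}_{\A_j})^{-1}\bar{X}_{\A_j}^{\top}\gamma\|\le\bar{\epsilon}/\sqrt{1-\delta_{|\A_j|}}$ to the two summands; this yields the stated bound on $\|\eta^j_{\A_j}-\eta^*_{\A_j}\|$, and the $\beta$-bound follows from $\beta^j-\beta^*=D(\eta^j-\eta^*)$ with $\|\eta^*_{\B}\|=\|D^{-1}\beta^*_{\B}\|$.

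For part~(2) I would first note that the hypothesis $\delta<(1-2\alpha)/(2\sqrt{K}+1)$ is exactly $\tfrac{2\delta\sqrt{K}+2\alpha}{1-\delta}<1$, so the interval $\big(\tfrac{1-\delta}{1-\delta-\delta\sqrt{K}-\alpha},\tfrac{1-\delta}{\delta\sqrt{K}+\alpha}\big)$ is nonempty; setting $s_2:=1+\tfrac{\delta\sqrt{K}+\alpha}{1-\delta}s_1$ makes $s_2/s_1$ decrease from $1$ at the left endpoint to $\tfrac{2(\delta\sqrt{K}+\alpha)}{1-\delta}$ at the right endpoint, so for admissible $\rho$ a valid pair with $s_2/s_1=\sqrt{\rho}$ exists, exactly as in Theorem \ref{th1}. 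The base case $k=0$ and the finite-termination half (the two optimality inequalities for $\lambda_1>\lambda_2$ giving $(\lambda_1-\lambda_2)(\|\eta_{\lambda_1}\|_0-\|\eta_{\lambda_2}\|_0)\le0$, hence $|\A(\lambda_k)|$ nondecreasing inside $\A^*$) carry over unchanged. The genuinely new work is the inner-loop claim \eqref{lps1}: since the thresholding in line~5 is coordinatewise against $\sqrt{2\lambda_k}$ while RIP controls only $\ell_2$-blocks, I would single out columns using $\bar{X}_i^{\top}\bar{X}_i=1$ and $\|\bar{X}_{\{i\}}^{\top}\bar{X}_S\|\le\delta_{1+|S|}\le\delta$ to reproduce the expansions \eqref{d1}--\eqref{d2}, substitute the part~(1) bound, and convert via $\|\eta^*_{\B}\|\le\sqrt{K}\|\eta^*_{\B}\|_\infty$, $\bar{\epsilon}\le\alpha\|\eta^*_{\B}\|_\infty$ and the elementary relaxations $\sqrt{1-\delta}\ge1-\delta$, $\sqrt{1+\delta}\le1/\sqrt{1-\delta}$. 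The target is $|d_i^j|\ge|\eta_i^*|-\tfrac{\delta\sqrt{K}+\alpha}{1-\delta}\|\eta^*_{\B}\|_\infty$ for $i\in\B$ and $|d_i^j|\le\tfrac{\delta\sqrt{K}+\alpha}{1-\delta}\|\eta^*_{\B}\|_\infty$ for $i\in\I^*$ (and the analogue for $|\eta_i^j|$ on $\A_j\cap\T_{\lambda_k,s_2}$); with $\|\eta^*_{\B}\|_\infty<s_1\sqrt{2\lambda_k}$ and $s_2=1+\tfrac{\delta\sqrt{K}+\alpha}{1-\delta}s_1$, this sends every index of $\T_{\lambda_k,s_2}$ into $\A_{j+1}$ and every index of $\I^*$ into $\I_{j+1}$, proving \eqref{lps1}.

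For part~(3) I would again argue by contradiction, assuming $\hat{\A}\subsetneq\A^*$ with $\B:=\A^*\setminus\hat{\A}\neq\varnothing$ and lower-bounding $L(\hat{\eta})=\tfrac12\|\bar{X}_{\B}\eta^*_{\B}+\gamma-\bar{X}_{\hat{\A}}(\hat{\eta}_{\hat{\A}}-\eta^*_{\hat{\A}})\|^2+\lambda|\hat{\A}|$. Crucially, I would \emph{not} extract a single coordinate (which would reintroduce a spurious $\sqrt{K}$ absent from $\xi$), but instead use the orthogonality of the least-squares residual to the range of $\bar{X}_{\hat{\A}}$ together with the combined-support RIP on $\hat{\A}\cup\B\subseteq\A^*$ and the part~(1) bound on $\|\hat{\eta}_{\hat{\A}}-\eta^*_{\hat{\A}}\|$ to reach $L(\hat{\eta})\ge\big(\tfrac{1-\delta}{2}-\tfrac{\delta^2}{1-\delta}-\tfrac{\alpha}{\sqrt{1-\delta}}\big)\|\eta^*_{\B}\|^2+\lambda|\hat{\A}|$. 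Bounding $L(\eta^o)\le\tfrac12\bar{\epsilon}^2+\lambda K$ and using $\|\eta^*_{\B}\|^2\ge|\B|\min_{i\in\A^*}|\eta^*_i|^2$, $\bar{\epsilon}^2\le\alpha^2\min_{i\in\A^*}|\eta^*_i|^2$ and $|\hat{\A}|=K-|\B|$, the $\lambda$-terms combine to $-\lambda|\B|$ and everything factors through $|\B|$, giving $L(\hat{\eta})-L(\eta^o)\ge|\B|(\xi-\lambda)>0$ for $\lambda\in(\bar{\epsilon}^2/2,\xi)$; the hypothesis $\alpha\le(1-2\delta-\delta^2)/4$ is exactly what guarantees $\xi>\bar{\epsilon}^2/2$, so this range is nonempty. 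Since $|\hat{\A}|<K$, $L(\hat{\eta})>L(\eta^o)$ forces $\|\bar{Y}-\bar{X}\hat{\eta}\|>\bar{\epsilon}$, contradicting the line-12 stopping rule; hence $\hat{\A}=\A^*$ and AFT-PDASC returns $\eta^o$.

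The main obstacle I expect is the coordinatewise step in part~(2): RIP is intrinsically an $\ell_2$/operator-norm property, so passing from block estimates to a per-coordinate comparison against the fixed threshold $\sqrt{2\lambda_k}$ forces the trade $\|\eta^*_{\B}\|\le\sqrt{K}\|\eta^*_{\B}\|_\infty$, and it is precisely this $\sqrt{K}$ that must land inside the combination $\tfrac{\delta\sqrt{K}+\alpha}{1-\delta}$ so that the endpoints of the $s_1,s_2$ interval and the admissible range of $\rho$ come out exactly as stated. A secondary, bookkeeping difficulty is choosing the RIP lower bound for the projected residual in part~(3) together with the right relaxations so that the constant reproduces $\xi$ exactly rather than a harmlessly different but mismatched coefficient.
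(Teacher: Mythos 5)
Your proposal is sound and, for parts (1) and (2), follows exactly the route the paper intends: the paper omits the proof of Theorem \ref{th2} entirely, saying only that it parallels Theorem \ref{th1}, and your substitution of the $\ell_2$/operator-norm estimates of Lemma \ref{mipl}(2) for the $\ell_\infty$ estimates of Lemma \ref{mipl}(1), together with the single-column trick $\|\bar{X}_{\{i\}}^{\top}\bar{X}_S\|\le\delta_{1+|S|}$ and the relaxations $1-\delta\le\sqrt{1-\delta}$, is precisely how the endpoints $\frac{1-\delta}{1-\delta-\delta\sqrt{K}-\alpha}$, $\frac{1-\delta}{\delta\sqrt{K}+\alpha}$ and the admissible range of $\rho$ are recovered; I checked that $f(s_1)=s_2/s_1$ with $s_2=1+\frac{\delta\sqrt{K}+\alpha}{1-\delta}s_1$ indeed sweeps $\big(\frac{2\delta\sqrt{K}+2\alpha}{1-\delta},1\big)$. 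The one place you genuinely depart from a verbatim transcription is part (3), and your reason is correct: extracting a single coordinate $i_{\A^*}$ as in Theorem \ref{th1}(3) yields a lower bound independent of $|\B|$, which must then absorb the worst case $|\B|=K$ and forces a factor $1/K$ into $\xi$ (visible in Theorem \ref{th1} but absent from Theorem \ref{th2}); your block argument, which keeps $\|\eta^*_{\B}\|^2\ge|\B|\min_i|\eta_i^*|^2$ so that the deficit factors as $|\B|(\xi-\lambda)$, is what the stated form of $\xi$ actually requires, so this deviation is a feature rather than a bug. The only caveat is the one you already flag: in part (3) the naive bound on the cross term $|\langle\bar{X}_{\B}\eta^*_{\B},\gamma\rangle|$ produces a coefficient like $\sqrt{1+\delta}+\frac{\delta}{\sqrt{1-\delta}}$ rather than $\frac{1}{\sqrt{1-\delta}}$, and converting $\bar{\epsilon}\le\alpha\min_i|\eta_i^*|$ into $\frac{\alpha}{\sqrt{|\B|}}\|\eta^*_{\B}\|$ closes the gap only for $|\B|\ge 2$; the case $|\B|=1$ needs a separate (easy) treatment or a slightly sharper estimate to reproduce the constant in $\xi$ exactly. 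Since the paper supplies no proof to compare against, this is a constant-matching detail to be resolved in the write-up, not a flaw in the approach.
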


\section{Numerical experiments}\label{num}
\setcounter{equation}{0}
%%%%%%%%%%%%%%%%%%%%%%%%%%%%%%%%%%%%%%%%%%%%
In this section, we demonstrate the progressiveness of the estimation method in \eqref{2model1} and highlight the effectiveness of the AFT-PDASC algorithm for handling censored data from both numerical simulations and real-world applications.
We also conduct performance comparisons with several state-of-the-art estimation methods, focusing on the impact of model parameters and the accuracy of the estimations.
The algorithms selected for comparison include the support detection and root finding algorithm (SDAR) \cite{CFHJZ2022}, Orthogonal Matching Pursuit (OMP) \cite{PRK1993}, Greedy Gradient Pursuit (GreedyGP) \cite{BD2008}, Accelerated Iterative Hard Thresholding (AIHT) \cite{B2012}, and Hard Thresholding Pursuit (HTP) \cite{F2011}.
All the experiments are performed with Microsoft Windows 11 and MATLAB R2022a, and run on a PC with an Intel(R) Xeon(R) W-2295 CPU at 3GHz and 128 GB of memory.

We now outline the data generation process used in the simulation study.
We first generate a $n\times p$ random Gaussian matrix $\check{X}$ whose entries
are i.i.d. $\N (0, 1)$. Then $X$ is generated with $X_1:=\check{X}_1$, $X_p:=\check{X}_p$ and $X_i:=\check{X}_i+\kappa(\check{X}_{i+1}+\check{X}_{i-1})$ for $i=2, \ldots, p-1$. Here, $\kappa$ measures the strength of the correlation between the covariates.
To generate the true regression coefficient $\beta^*$, we first randomly select a subset of $\S$ to form the true active set $\A^*$. Let $R:=m_2/m_1$, where $m_2=\max\{|\beta^*_i|: i\in \A^*\}$ and $m_1=\min\{|\beta^*_i|: i\in \A^*\}$.
Subsequently, the $K$ nonzero coefficients in $\beta^*$ are distributed uniformly within the interval $(m_1,m_2)$.
Then we set $\ln(T_i):=X_i^{\top}\beta^*+\epsilon_i$ for $i=1,\ldots,n$, where $\epsilon_i$ is generated independently from $\N(0,\sigma^2)$.
The censoring time $C_i$ follows a uniform distribution $U(0, \zeta)$, where $\zeta$ controls the censoring rate.
Then for $i=1,\ldots,n$, the response variable is generated by $Y_i:=\text{min}\{\text{ln}(T_i),\text{ln}(C_i)\}$.
In  AFT-PDASC, we use a grid search method to select the appropriate regularization parameter. Specifically, we choose $\lambda_0=\frac{1}{2}\|\bar{X}^{\top}\bar{Y}\|_{\infty}^2$ and set $\lambda_{\min}:=10^{-15}\lambda_0$, and then divide the interval $[\lambda_{\min},\lambda_0]$ into $N$ equally spaced subintervals. It is easy to observe that as $N$ increases, the decay factor $\rho$ also becomes larger. The values of the other parameters will be given as they occur.

\subsection{Performance evaluation on a simple simulated data}

In this part, we analyze the numerical performance of algorithm AFT-PDASC  using $100$ independent trials. Here, we choose $n=500$, $p=1000$, $K=10$, $\kappa=0.3$, $N=100$, and $J_{\max}=2$. Let $c.r$ be the censoring rate, and in this case, we consider $c.r=0.3$. At the same time, we fix the $10$ non-zero elements of the true regression coefficients $\beta^*$ to be $\beta^*_{34}=5$, $\beta^*_{166}=-1$, $\beta^*_{278}=2$, $\beta^*_{354}=-3$, $\beta^*_{409}=4$, $\beta^*_{520}=-5$, $\beta^*_{666}=1$, $\beta^*_{708}=-4$, $\beta^*_{821}=3$, $\beta^*_{942}=-2$.
The boxplot in Figure \ref{fig1} illustrates the estimation performance of AFT-PDASC in this test.

\begin{figure}[h]
	\centering\vspace{-.5cm}
	\includegraphics[width=0.6\textwidth]{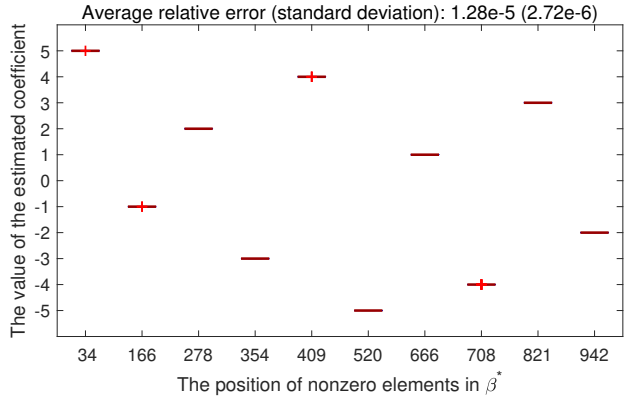}
	\vspace{-.8cm}
	\caption{\scriptsize The boxplot of AFT-PDASC  under $100$ independent trials}
    \label{fig1}
\end{figure}

From Figure \ref{fig1}, it is evident that the AFT-PDASC algorithm not only accurately locates each non-zero element but also consistently estimates their values correctly in nearly every trial. Additionally, the average relative error and standard deviation between the true regression coefficient and the estimated coefficient are very small, further highlighting the excellent estimation performance of the method in \eqref{2model1} and the effectiveness of the AFT-PDASC algorithm.

\subsection{Performance comparisons with other state-of-the-art algorithms}

In this part, we conduct the performance comparisons of the AFT-PDASC algorithm with SDAR, OMP, GreedyGP, AIHT, and HTP, focusing on the impact of model parameters and estimation performance.
To demonstrate the numerical stability of each algorithm, we perform a simulation analysis based on the results of $10$ independent repetitions.
Let $\widehat{\A}^{(i)}$ be the active set obtained in the $i$-th experiment.
In addition to recording the computing time (Time) and average relative error (ReErr), we introduce an indicator defined as ``$\text{Probability}=\frac{1}{10} \sum_{i=1}^{10} \mathbf{1}_{\{\widehat{\A}^{(i)}=\A^*\}}$", which measures the probability of accurately recovering the true active set.

\subsubsection{Parameters' values influence evaluation}
In this test, we examine the impact of certain parameter values, namely $\{n, p, K, \kappa\}$, on the active set recovery performance of each algorithm. The specific values of all parameters for each experimental setting are as follows:
\begin{itemize}
	\item[(i)] $n=\{200:50:500\}$, $p=1000$, $K=15$, $R=10^3$, $\kappa=0.3$, $c.r=0.3$, $\sigma=1e-3$, $N=150$, $J_{\max}=6$.
	\item[(ii)] $n=500$, $p=\{1200:100:2000\}$, $K=15$, $R=10^3$, $\kappa=0.3$, $c.r=0.3$, $\sigma=1e-3$, $N=150$, $J_{\max}=6$.
	\item[(iii)] $n=500$, $p=1000$, $K=\{10:10:200\}$, $R=10^3$, $\kappa=0.3$, $c.r=0.3$, $\sigma=1e-3$, $N=150$, $J_{\max}=6$.
	\item[(iv)] $n=500$, $p=1000$, $K=15$, $\kappa=\{0.1:0.1:0.8\}$, $R=10^3$, $c.r=0.3$, $\sigma=1e-3$, $N=150$, $J_{\max}=6$.
\end{itemize}
Figure \ref{fig2} shows the results of ``Probability" for the six algorithms under different values of the parameters $\{n, p, K, \kappa\}$.
It is evident from the figure that the regression probability of the AFT-PDASC algorithm is consistently higher than that of the other algorithms, with the value of "Probability" approaching $1$ in most cases.
This phenomenon not only demonstrates that the AFT-PDASC algorithm is more stable compared to other methods, but also indicates that it can achieve accurate recovery across a wide range of parameter settings.

\begin{figure}
\centering
\includegraphics[width=0.4\textwidth]{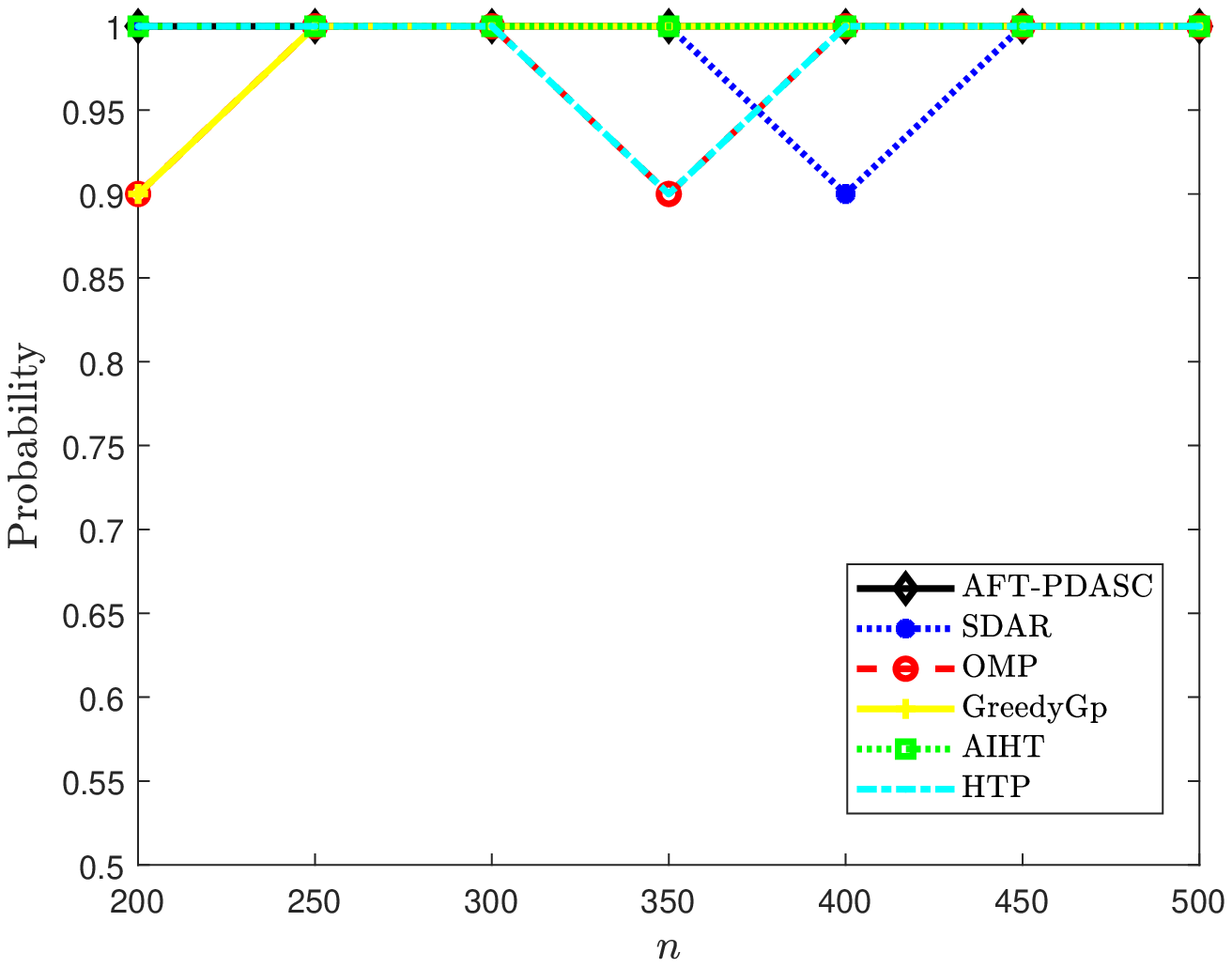}\hspace{-.4cm}
\includegraphics[width=0.4\textwidth]{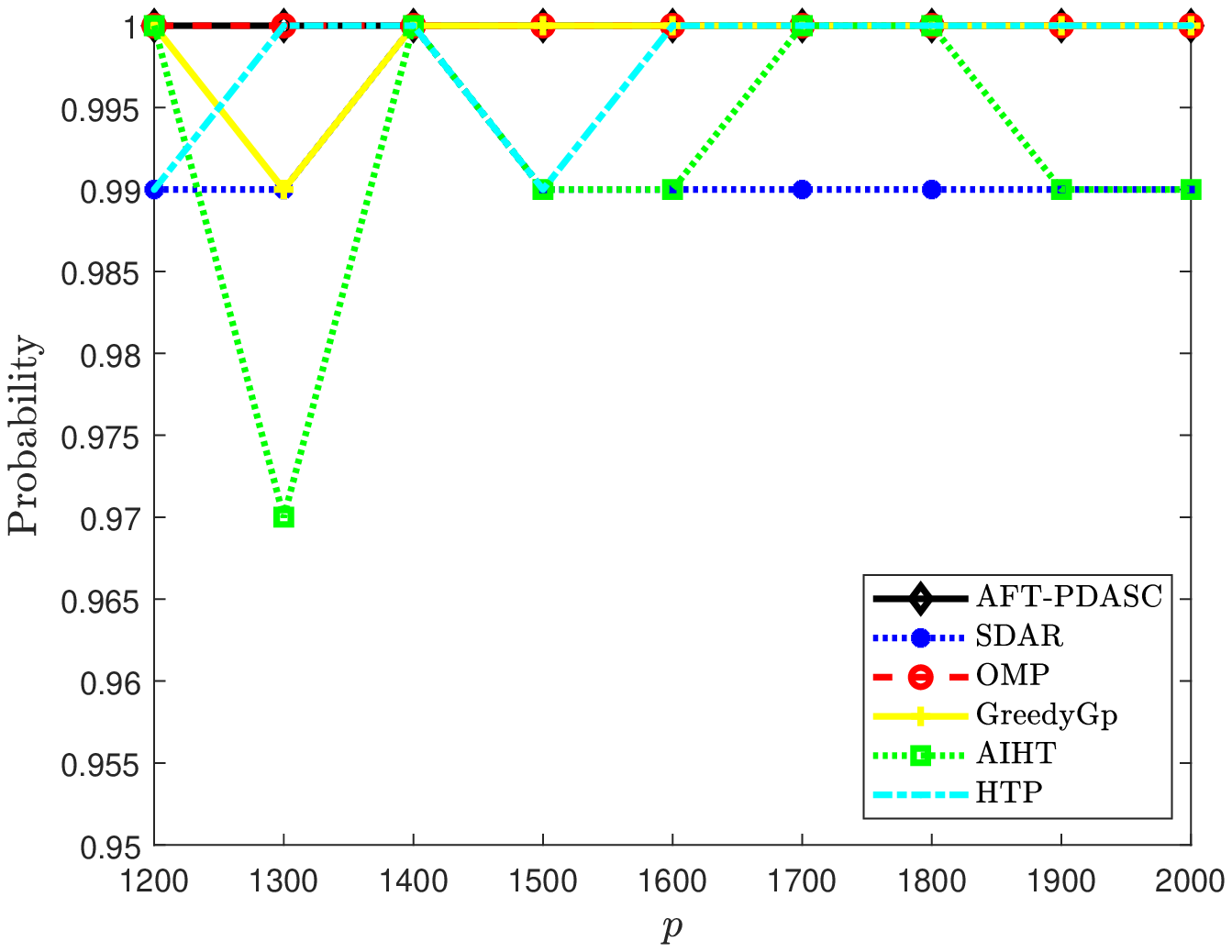}\\
\includegraphics[width=0.4\textwidth]{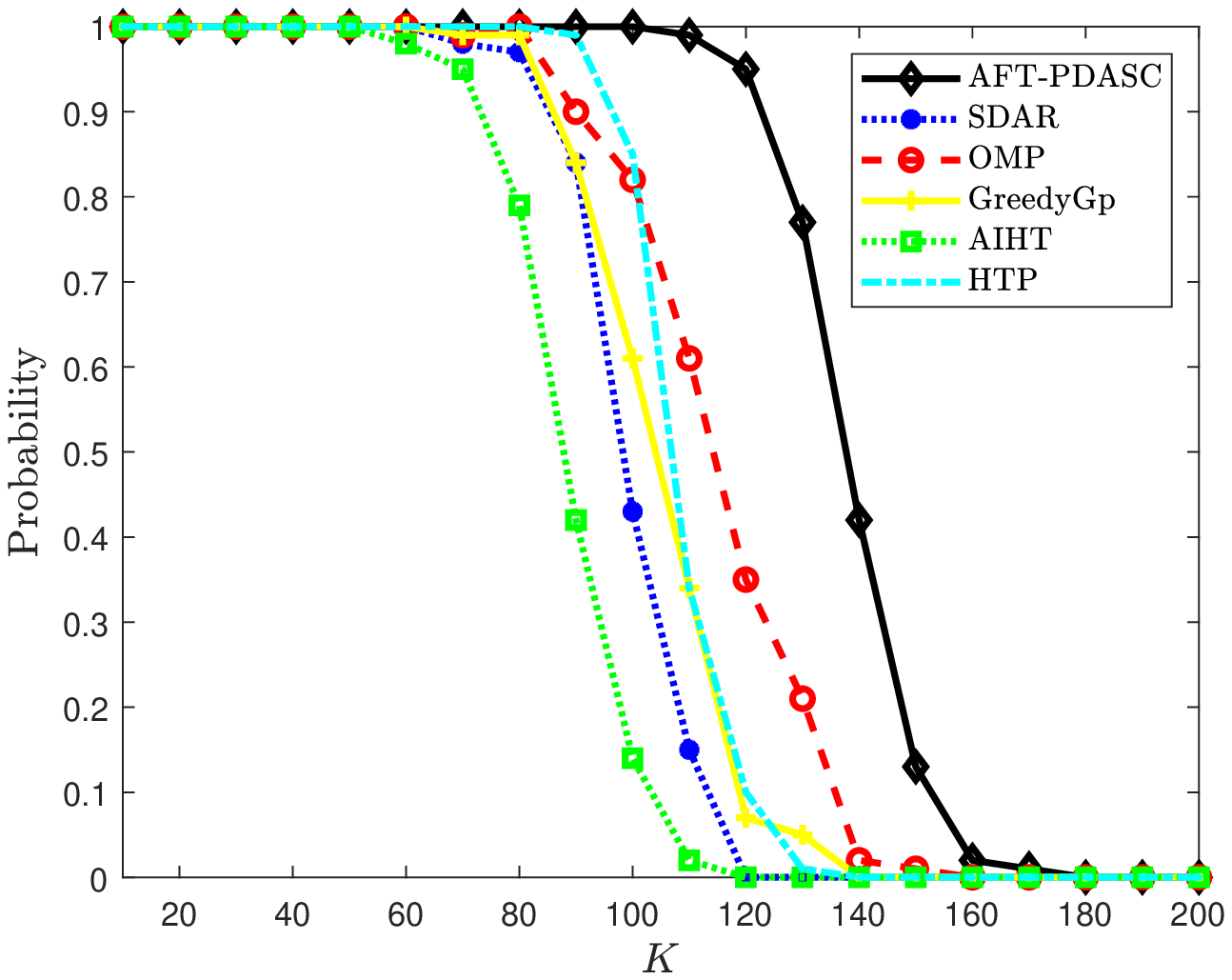}\hspace{-.4cm}
\includegraphics[width=0.4\textwidth]{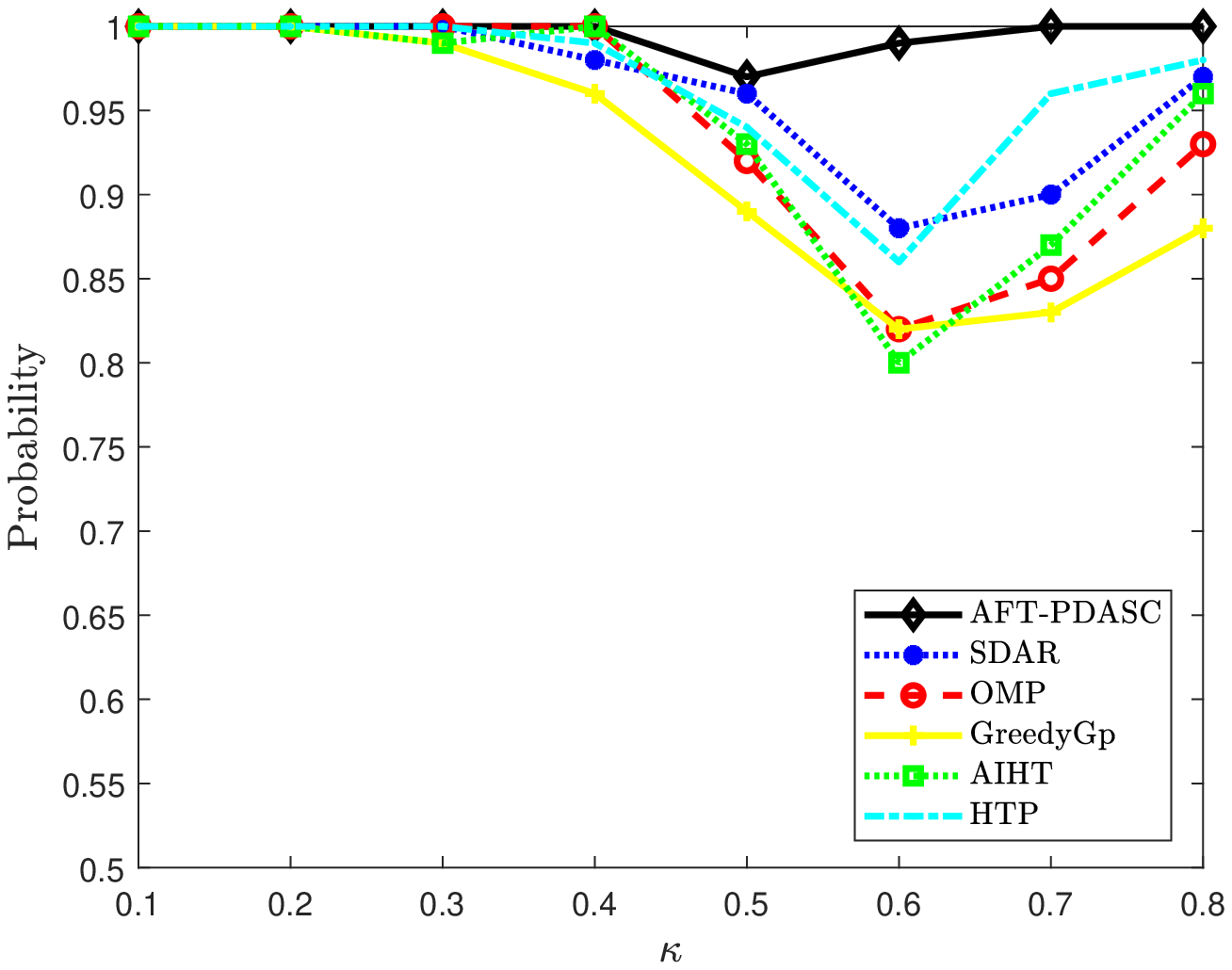}
\caption{\scriptsize The ``Probability" values of each algorithm under different parameters' settings}
\label{fig2}
\end{figure}

%%%%%%%%%%%%%%%%%%%%%%%%%%%%%%%%%%%%%%%%%%%%%%5
\subsubsection{Performance comparisons on simulated data}
%%%%%%%%%%%%%%%%%%%%%%%%%%%%%%%%%%%%%%%%%%%%%%%%
To further highlight the accuracy and efficiency of the AFT-PDASC algorithm, we compare it with the other algorithms in terms of three key metrics: computing time (Time), relative error (ReErr), and recovery probability (Probability).
In this test, we consider three different levels of correlation for the simulation matrix by setting $\kappa=0.1:0.3:0.7$. Additionally, we consider three different dimensions by setting $p = 1000$, $2000$ and $3000$. Other parameters's values are fixed as $n=500$, $K=20$, $R=10^3$, $c.r=0.3$, $\sigma=1e-3$, $N=100$, and $J_{\max}=2$. The results of the different algorithms in terms of Time, ReErr and Probability are listed in Table \ref{tab1}.

\renewcommand{\arraystretch}{0.7}
\begin{table}
	\setlength{\tabcolsep}{11pt}
	\centering
	\caption{Numerical results of each algorithm on simulated data}
	\scriptsize
	\begin{tabular}{>{\centering\arraybackslash}m{1.5cm} >{\centering\arraybackslash}m{1.5cm} l c c c}
		\toprule
		 $\kappa$&$p$ & Methods & Time(s) & ReErr &Probability\\
		  & & & &  \\
		\midrule
		\multirow{18}{*}{$0.1$}
		&\multirow{6}{*}{$1000$}
		& AFT-PDASC & 1.42e-2(2.52e-3) & 1.84e-7(3.77e-8) &1(0)\\
		&& SDAR & 2.00e-3(4.57e-4) & 1.84e-7(3.77e-8) &1(0)\\
		&& OMP & 1.77e-3(1.94e-4) & 1.84e-7(3.77e-8)&1(0) \\
		&& GreedyGP & 1.50e-3(1.16e-4) & 1.05e-6(3.54e-7) &1(0)\\
		&& AIHT & 3.40e-3(2.44e-4) & 1.93e-7(5.08e-8)&1(0) \\
		&& HTP & 1.15e-2(2.44e-3) & 1.84e-7(3.77e-8)&1(0)\\
	    \cmidrule{2-6}
	    \multirow{18}{*}
		&\multirow{6}{*}{$2000$}
		& AFT-PDASC & 4.65e-2(7.54e-3) & 1.81e-7(2.82e-8)&1(0) \\
		&& SDAR & 2.89e-3(8.04e-4) & 1.81e-7(2.82e-8)&1(0) \\
		&& OMP & 3.08e-3(2.75e-4) & 1.81e-7(2.82e-8)&1(0) \\
		&& GreedyGP & 2.95e-3(2.89e-4) & 1.13e-6(2.82e-7)&1(0) \\
		&& AIHT & 7.45e-3(1.11e-3) & 1.92e-7(4.00e-8)&1(0)  \\
		&& HTP & 2.15e-2(2.05e-3) & 1.81e-7(2.82e-8) &1(0) \\
		\cmidrule{2-6}
		\multirow{18}{*}
		&\multirow{6}{*}{$3000$}
		& AFT-PDASC & 1.36e-1(1.20e-2) & 1.88e-7(2.76e-8)&1(0) \\
		&& SDAR & 4.60e-3(1.04e-3) & 1.88e-7(2.76e-8)&1(0) \\
		&& OMP & 4.82e-3(7.69e-4) & 1.88e-7(2.76e-8)&1(0) \\
		&& GreedyGP & 5.48e-3(8.02e-4) & 1.02e-6(3.08e-7)&1(0)  \\
		&& AIHT & 1.42e-2(2.12e-3) & 2.03e-7(3.44e-8) &1(0) \\
		&& HTP & 3.08e-2(2.58e-3) & 1.88e-7(2.76e-8)&1(0)  \\	
	    \midrule
		\multirow{18}{*}{$0.4$}
		&\multirow{6}{*}{$1000$}
		& AFT-PDASC & 1.48e-2(1.91e-3) & 1.73e-7(4.05e-8)&1(0) \\
		&& SDAR & 2.31e-3(8.16e-4) & 1.73e-7(4.05e-8)&1(0) \\
		&& OMP & 2.13e-3(6.53e-4) & 1.73e-7(4.05e-8) &1(0)\\
		&& GreedyGP & 1.85e-3(4.99e-4) & 1.04e-6(3.64e-7)&1(0) \\
		&& AIHT & 7.73e-3(1.31e-3) & 4.09e-4(1.29e-3)&0.9(0.32)  \\
		&& HTP & 1.37e-2(2.34e-3) & 1.73e-7(4.05e-8)&1(0) \\
		\cmidrule{2-6}
		\multirow{18}{*}
		&\multirow{6}{*}{$2000$}
		& AFT-PDASC & 5.49e-2(1.36e-2) & 1.60e-7(2.85e-8) &1(0)\\
		&& SDAR & 3.26e-3(8.01e-4) & 1.60e-7(2.85e-8)&1(0) \\
		&& OMP & 3.02e-3(3.16e-4) & 1.60e-7(2.85e-8) &1(0)\\
		&& GreedyGP & 3.83e-3(1.19e-3) & 1.84e-5(5.48e-5)&1(0)  \\
		&& AIHT & 1.55e-2(2.22e-3) & 1.79e-7(3.72e-8)&1(0)  \\
		&& HTP & 2.58e-2(2.28e-3) & 1.60e-7(2.85e-8) &1(0) \\
		\cmidrule{2-6}
		\multirow{18}{*}
		&\multirow{6}{*}{$3000$}
		& AFT-PDASC & 1.60e-1(2.18e-2) & 1.56e-7(2.82e-8) &1(0)\\
		&& SDAR & 6.09e-3(1.71e-3) & 1.56e-7(2.82e-8)&1(0) \\
		&& OMP & 7.16e-3(2.10e-3) & 1.56e-7(2.82e-8)&1(0) \\
		&& GreedyGP & 8.58e-3(1.86e-3) & 1.12e-6(2.90e-7) &1(0) \\
		&& AIHT & 3.68e-2(1.01e-2) & 1.65e-7(3.06e-8)&1(0)  \\
		&& HTP & 4.38e-2(8.31e-3) & 1.56e-7(2.82e-8)&1(0)  \\
		\midrule
		\multirow{18}{*}{$0.7$}
		&\multirow{6}{*}{$1000$}
		& AFT-PDASC & 1.73e-2(6.72e-3) & 1.44e-7(3.50e-8)&1(0) \\
		&& SDAR & 4.36e-3(5.71e-3) & 1.44e-7(3.50e-8)&1(0)  \\
		&& OMP & 3.65e-3(3.14e-3) & 1.67e-7(7.98e-8)&0.8(0.42)  \\
		&& GreedyGP & 2.10e-3(1.28e-3) & 1.34e-2(3.31e-2) &0.7(0.48) \\
		&& AIHT & 1.03e-2(2.62e-3) & 1.06e-3(3.36e-3)&0.9(0.32) \\
		&& HTP & 1.41e-2(2.23e-3) & 6.75e-4(2.14e-3) &0.9(0.32) \\
		\cmidrule{2-6}
		\multirow{18}{*}
		&\multirow{6}{*}{$2000$}
		& AFT-PDASC & 5.31e-2(9.53e-3) & 1.33e-7(2.61e-8) &1(0)\\
		&& SDAR & 3.56e-3(5.51e-4) & 2.11e-4(6.66e-4) &0.9(0.32)\\
		&& OMP & 3.27e-3(5.58e-4) & 1.42e-7(4.05e-8)&0.7(0.48) \\
		&& GreedyGP & 2.82e-3(4.84e-4) & 1.42e-3(3.38e-3)&0.8(0.42)  \\
		&& AIHT & 2.09e-2(3.03e-3) & 2.41e-4(7.61e-4)&0.9(0.32)  \\
		&& HTP & 2.76e-2(4.20e-3) & 2.11e-4(6.66e-4)&0.9(0.32)  \\
		\cmidrule{2-6}
		\multirow{18}{*}
		&\multirow{6}{*}{$3000$}
		& AFT-PDASC & 1.31e-1(1.11e-2) & 1.25e-7(2.46e-8) &1(0)\\
		&& SDAR & 5.82e-3(1.36e-3) & 2.12e-4(6.69e-4) &0.9(0.32)\\
		&& OMP & 4.80e-3(8.19e-4) & 1.29e-7(2.52e-8)&0.8(0.42) \\
		&& GreedyGP & 5.81e-3(1.49e-3) & 7.86e-3(2.00e-2)  &0.7(0.48) \\
		&& AIHT & 3.23e-2(7.30e-3) & 4.68e-4(1.48e-3)  &0.9(0.32) \\
		&& HTP & 3.71e-2(5.08e-3) & 1.25e-7(2.46e-8) &1(0)  \\
	\bottomrule
	\end{tabular}
\label{tab1}
\end{table}

From the results in Table \ref{tab1}, it is evident that when $\kappa=0.1$ and $\kappa=0.4$, i.e., when the correlation between covariates is relatively low, all six algorithms can quickly achieve good estimation results and perform accurate recovery across the three dimensions.
However, when $\kappa=0.7$, i.e., when the correlation between covariates is high, AFT-PDASC outperforms the other algorithms in terms of relative error and recovery probability, consistently achieving precise estimation.
It is worth noting that the AFT-PDASC algorithm employs a grid search method for its regularization parameter, which results in higher computing time compared to other algorithms. However, the overall time required remains relatively low and within an acceptable range. In summary, the AFT-PDASC algorithm exhibits both high computational efficiency and numerical robustness in the simulation tests.
%%%%%%%%%%%%%%%%%%%%%%%%%%%%%%%%
\subsubsection{Performance comparisons on real-world datasets}
%%%%%%%%%%%%%%%%%%%%%%%%%%%%%%%%%
In this section, we perform numerical tests using the NKI70 breast cancer dataset, which is available in the R package. This dataset includes data from 144 breast cancer patients with lymph node-positive metastasis-free survival, along with gene expression measurements for $70$ genes that were identified as prognostic for metastasis-free survival in a previous study. The censoring rate for this dataset is $66.67\%$.
For the AFT-PDASC algorithm, we set parameters as $N=100$ and $J_{\max}=1$. Additionally, we compare the performance of AFT-PDASC with other algorithms mentioned earlier, including SDAR, OMP, GreedyGP, AIHT, and HTP. The computational results are summarized in Table \ref{tab2}.

\begin{table}
	\setlength{\tabcolsep}{5pt} %
	\renewcommand{\arraystretch}{1.0} %
	\centering
	\caption{Numerical results of each algorithm on real data `nki70'}
	\scriptsize
	\begin{tabular}{c c c c c c c c}
		\toprule
		Gene name & Number & AFT-PDASC & SDAR & OMP & GreedyGP & AIHT & HTP\\
		\midrule
		TSPYL5 & 1 & - & - & - & - &-0.38 &-\\
		Contig63649\_RC & 2 & - & - & -1.66 & -0.53 & - & -\\
		AA555029\_RC & 5 & - & - & -1.28 & - & - & -\\
		ALDH4A1 & 6 & -2.61 & - & -2.20 & -1.70 & -2.64 & -2.00\\
		Contig32125\_RC & 10 & - & - & -0.97 & - & - & -\\
		SCUBE2 & 15 & -0.41 & - & - & - & - & -\\
		EXT1 & 16 & 4.18 & 1.67 & 6.55 & 3.95 & 3.74 & 3.36\\
		GNAZ & 18 & - & 0.55 & - & - & - & 0.81\\
		MMP9 & 20 & -3.91 & -2.52 & -3.99 & -3.90 & -2.71 & -3.59\\
		RUNDC1 & 21 & - & -1.09 & -1.97 & - & - & -\\
		GMPS & 24 & - & - & -2.02 & - & - & -\\
		KNTC2 & 25 & - & - & 0.09 & 1.89 & - & -\\
		WISP1 & 26 & - & - & -2.72 & - & - & -\\
		CDC42BPA & 27 & 2.25 & - & 4.15 & 2.26 & - & 1.62\\
		GSTM3 & 30 & - & -1.19 & -1.50 & - & -1.26 & -0.81\\
		GPR180 & 31 & - & 1.02 & - & -& - & - \\
		RAB6B & 32 & - & - & -0.82 & - & - & -\\
		MTDH & 37 & -1.46 & -1.23 & - & - & - & -1.21\\
		DCK & 44 & - & - & -2.72 & -1.67 & - & -\\
		SLC2A3 & 47 & 1.49 & 1.66 & 2.69 & 2.41 & - & 1.71\\
		CDCA7 & 51 & - & -0.64 & - & - & -0.86 & -\\
		MS4A7 & 53 & - & - & 0.79 & - & - & -\\
		MCM6 & 54 & - & 1.81 & - & - & - & -\\
		AP2B1 & 55 & - & 0.96 & - & - & - & -\\
		PALM2.AKAP2 & 62 & - & - & - & - & 1.84 & -\\
		LGP2 & 63 & - & 1.30 & 1.85 & 0.97 & - & -\\
		CENPA & 66 & -2.05 & -2.09 & -1.45 & -1.97 & -0.85 & -1.66\\
		NM\_004702 & 68 & - & - & - & - & -0.41 & -\\
		ESM1 & 69 & - & - & 1.21 & - & - & -\\
		C20orf46 & 70 & - & -0.95 & - & - & -0.81 & -0.78\\
		\bottomrule
	\end{tabular}
\label{tab2}
\end{table}

From Table \ref{tab2}, we observe that the AFT-PDASC algorithm selects the fewest number of genes in most cases, while the OMP method seemly selects the most.
Moreover, the estimated coefficients for the genes chosen by all six algorithms share the same mathematics sign.
Specifically, for the 6th gene, `ALDH4A1', the regression coefficients estimated by AFT-PDASC and AIHT are quite similar. For the 20th gene, `MMP9', the coefficient estimated by AFT-PDASC is close to those obtained by OMP and GreedyGP. In the case of the 27th gene, `CDC42BPA', AFT-PDASC's coefficient is comparable to that of GreedyGP. Lastly, for the 66th gene, `CENPA', the estimated coefficient from AFT-PDASC is similar to those from SDAR and GreedyGP.
These observations suggest that the AFT-PDASC algorithm not only selects fewer genes for the NKI70 dataset but also exhibits superior estimation performance.

\section{Conclusions}\label{con}

In this paper, we addressed the AFT problem with right-censored survival data by employing a weighted least-squares method with an $\ell_0$-penalty for parameter estimation and variable selection.
For practical implementations, we developed an efficient primal dual active set algorithm and utilize a continuous strategy to select the appropriate regularization parameter.
From a theoretical perspective, we provided an error analysis for the estimated coefficients, grounded in certain assumptions regarding the covariate matrix and noise level. Additionally, we proved that the algorithm converges to the oracle solution in a finite number of steps by demonstrating the special monotonicity of the active set throughout the iterative process.
We also conducted extensive tests of the AFT-PDASC algorithm on both simulated and real-world datasets, showing its superior performance in comparison to other leading algorithms. Thus, we conclude that the AFT-PDASC algorithm is an effective tool for analyzing high-dimensional censored survival data.

\section*{Acknowledgements}
We would like to thank professor Xiliang Lu from  Wuhan university for his guidance and significant assistance in the theoretical analysis of this paper.
%We also would like to thank the anonymous referees and the associate editor for their useful comments and suggestions which improved this paper greatly.

\section*{Disclosure statement}
The authors report there are no competing interests to declare. All authors contributed to the study conception and design. All authors read and approved the final manuscript.

\section*{Funding}
%{\color{red}This work was supported by the National Natural Science Foundation of China under Grant [number 11971149].}
The work of P. Li is supported by the National Natural Science Foundation of China (Grant No. 12301420).
The work of Y. Ding is supported by the Shenzhen Polytechnic University Research Fund (Grant No. 6024310021K).
The work of Y. Xiao is supported by the National Natural Science Foundation of China (Grant No. 12471307 and 12271217), the Natural Science Foundation of Henan Province (Grant No. 232300421018).

\bibliography{references}  %参考文献库的名字Ref
\bibliographystyle{abbrv}

\end{document}